\numberwithin{equation}{section}
\newcommand*{\lz}{\lambda}
\newcommand*{\id}{\mathop{}\!\mathrm{d}}
\newcommand*{\js}{\psi}
\newcommand*{\mjs}{\widehat{\js}}
\newcommand*{\josts}{\psi}
\newcommand*{\jss}{\varphi}
\newcommand*{\st}{\sigma_3}
\newcommand{\determz}[3]{#1_{#2}^{\dagger} {#3} #1_{#2} {#3}}
\newcommand*{\C}{\mathbb{C}}
\newcommand*{\R}{\mathbb{R}}
\newcommand*{\I}{\mathds{1}}
\DeclareMathOperator*{\Res}{Res}
\DeclareMathOperator{\sech}{sech}
\DeclareMathOperator{\diag}{diag}
\DeclareMathOperator{\tr}{tr}
\newtheorem{remark}{Remark}
\newtheorem{prop}{Proposition}
\newtheorem{lem}{Lemma}
\newtheorem{theorem}{Theorem}
\newmdtheoremenv{rhp}{Riemann--Hilbert problem}
\newcommand\Item[1][]{%
  \ifx\relax#1\relax  \item \else \item[#1] \fi
  \abovedisplayskip=0pt\abovedisplayshortskip=0pt~\vspace*{-\baselineskip}}
\begin{document}
\title{Dressing a new integrable boundary of the nonlinear
Schr\"odinger equation}
\author{K.T. Gruner%
\thanks{The author is partially supported by the SFB/TRR 191 `Symplectic Structures in Geometry,
Algebra and Dynamics', funded by the DFG.}\\\texttt{kgruner@math.uni-koeln.de}
\\\\ \emph{Universit{\"a}t zu K{\"o}ln, Mathematisches Institut,}
\\ \emph{Weyertal 86-90, 50931 K{\"o}ln, Germany}
}
\date{August 07, 2020}
\maketitle
  \begin{abstract}
  We further develop the method of dressing the boundary for the focusing nonlinear Schr\"odinger
  equation (NLS) on the half-line to include the new boundary condition presented by Zambon.
  Additionally, the foundation to compare the solutions to the ones produced by the mirror-image technique
  is laid by explicitly computing the change of scattering data under the Darboux transformation.
  In particular, the developed method is applied to insert pure soliton solutions.
  \end{abstract}
  {\bf Keywords:} NLS equation, integrable boundary conditions, half-line,
  initial-boundary value problems, soliton solutions, dressing transformation, inverse scattering method.
\section{Introduction}
The nonlinear Schr{\"o}dinger equation is one of the well-known examples, where the model of a physical
phenomenon incorporates both nonlinearity and dispersion in such a way that a soliton---(i) a wave of
permanent form (ii) which is localized (iii) and can interact strongly with other solitons and
retain its identity---emerges. For integrable models as the NLS equation is, these special solutions
have been worked out extensively and primarily by the inverse scattering method. In this context
it should be noted that many physical phenomena naturally arise as initial-boundary value problems,
due to the localized character of the problem. Nonetheless, characterizing soliton solutions for
these problems is substantially less developed than for the corresponding initial value problems,
which is therefore an objective of this work.

Similar to the model of the initial value problem, the integrability is a crucial property to be
able to derive soliton solutions. Besides the usually addressed boundary conditions,
the Dirichlet, Neumann and Robin boundary condition \cite{BiHw}, in that category for the NLS equation,
a new integrable boundary has been derived by dressing the Dirichlet boundary with a defect \cite{Za}.
To then find soliton solutions in these models different method have been successfully applied.

One of these methods, which is called ``dressing the boundary'', utilizes the Darboux transformation in a way
which preserves the integrable structures of the system at the boundary. Again, the usual boundary conditions
in the case of the NLS equation provided a suitable framework to apply this method \cite{Zh}, after it had
already been used to produce results in the case of the sine-Gordon equation on the half-line together
with integrable boundary conditions \cite{ZCZ}. Most importantly,
for these boundary conditions, it has been established that solitons which travel with a specific velocity need
to have a counterpart, we call mirror soliton, with equal amplitude and opposite velocity. This pair of solitons
stand for the reflection happening at the boundary, where the soliton interchanges its role with the
corresponding mirror soliton.

Every soliton has a specific set of so called scattering data in the context of the inverse scattering method,
from which it can be described uniquely. In \cite{BiHw}, apart from using a different method, the mirror-image
technique, to construct soliton solutions in the aforementioned usually addressed models with integrable boundary
conditions, relations regarding the scattering data of the soliton and of the corresponding mirror soliton have
been established. Having these relations facilitates the analysis and understanding of the soliton behavior.

The paper is organized as follows: In Section~\ref{sc:NLS}, we review the inverse scattering method
for the NLS equation in order to derive the expression of a one-soliton solution and its dependency
on the scattering data. Then, we adapt the method of dressing the boundary to the new boundary conditions
in Section~\ref{sc:DTB}. Using these results, we take pure soliton solutions in Section~\ref{sc:rel},
which can be constructed in this theory, and point out common relations between the scattering data
of this solution. Finally, we visualize said solutions in Section~\ref{sc:sol}. 

\section{Initial value problem for the NLS}\label{sc:NLS}
In the following, we give a brief summary of the inverse scattering transform of the
focusing NLS equation. As in \cite{BiHw} and \cite{Fu}, it will serve as a guideline
in order to implement additional results. Therefore, following the analysis given in
\cite{A}, we introduce the NLS equation
\begin{align}\label{eq:NLS}
\begin{split}
  &i u_t + u_{xx} + 2 |u|^2u = 0, \\
  &u(0,x) = u_0(x)
\end{split}
\end{align}
for \(u(t,x)\colon \R \times \R \mapsto \C\) and the initial condition \(u_0(x)\).
The equation can be expressed in an equivalent compatibility condition of the following
linear spectral problems
\begin{align}\label{eq:jost}
\begin{split}
  &\js_x = U\js,\\
  &\js_t = V\js,
\end{split}
\end{align}
where \(\js(t,x,\lz)\) and the matrix operators
\begin{equation}\label{eq:UV}
  U= -i \lz \st + Q, \quad
  V=-2 i\lz^2 \st +\widetilde{Q}
\end{equation}
are \(2\times2\) matrices. The potentials \(Q\) and \(\widetilde{Q}\)
of \(U\) and \(V\) are defined by
\begin{equation*}
  Q(t,x)=
  \begin{pmatrix}
    0 & u \\
    -u^\ast & 0
  \end{pmatrix},\quad
  \widetilde{Q}(t,x,\lz)=
  \begin{pmatrix}
    i|u|^2 & 2\lz u+iu_x \\
    -2\lz u^\ast+iu_x^\ast & -i|u|^2
  \end{pmatrix}\quad \text{and }
  \st=
  \begin{pmatrix}
    1 & 0 \\
    0 & -1
  \end{pmatrix}.
\end{equation*}
In this context, the matrices \(U\) and \(V\) form a so-called Lax pair,
depending not only on \(t\) and \(x\), but also on a spectral parameter \(\lz\).
Hereafter, the asterix denotes the complex conjugate, \(\C_+=\{\lz\in \C \colon
\Im(\lz)>0\}\) as well as \(\C_-=\{\lz\in \C \colon \Im(\lz)<0\}\)
and \(\js^\intercal\) is the transpose of \(\js\).
For a solution \(\js(t,x,\lz)\) of the Lax system~\eqref{eq:jost} the compatibility
condition \(\js_{tx} = \js_{xt}\) for all \(\lambda \in \C\)
is equivalent to \(u(t,x)\) satisfying the NLS equation \eqref{eq:NLS}. Moreover,
we will refer to \(U\) and \(V\) as the \(x\) and \(t\) part of the Lax pair,
respectively.
\subsection{Inverse scattering method for the NLS equation}
In that regard, given a sufficiently fast decaying function
\(u(t,x)\to0\) and derivative \(u_x(t,x)\to0\) as \(|x|\to\infty\), it is reasonable
to assume that there exist \(2\times 2\)-matrix-valued solutions, we call modified Jost
solutions under time evolution, \(\mjs(t,x,\lz)=\js(t,x,\lz) e^{i\theta(t,x,\lz)\st}\),
where \(\theta(t,x,\lz)=\lz x+ 2 \lz^2 t\), of the modified Lax system
\begin{equation*}
  \mjs_x + i \lambda [\st,\mjs] = Q\mjs,\qquad
  \mjs_t + 2i\lambda^2[\st,\mjs]= \widetilde{Q}\mjs
\end{equation*}
with constant limits as \(x\to\pm\infty\) and for all \(\lz\in\R\),
\begin{equation*}
  \mjs_\pm(t,x,\lz) \to \I,  \qquad \text{as }x\to\pm\infty.
\end{equation*}
They are solutions to the following Volterra integral equations:
\begin{equation}\label{eq:Vie}
\begin{aligned}
  \mjs_-(t,x,\lz) &= \I + \int_{-\infty}^{x}
  e^{-i\theta(0,x-y,\lz)\st} Q(t,y) \mjs_-(t,y,\lz) e^{i\theta(0,x-y,\lz)\st}  \id y, \\
  \mjs_+(t,x,\lz) &= \I - \int_{x}^{\infty}
  e^{-i\theta(0,x-y,\lz)\st} Q(t,y) \mjs_+(t,y,\lz) e^{i\theta(0,x-y,\lz)\st}  \id y.
\end{aligned}
\end{equation}
\begin{lem}
Let \(u(t,\cdot)\in H^{1,1}(\R)=\{f\in L^2(\R)\colon xf,f_x\in L^2(\R)\}\).
Then, for every \(\lz\in\R\), there exist unique solutions \(\mjs_\pm(t,\cdot,\lz)
\in L^\infty(\R)\) satisfying the integral equations \eqref{eq:Vie}. Thereby,
the second column vector of \(\mjs_-(t,x,\lz)\) and the first column vector
of \(\mjs_+(t,x,\lz)\) can be continued analytically in \(\lz\in \C_-\) and
continuously in \(\lz\in \C_-\cup\R\), while the first column vector
of \(\mjs_-(t,x,\lz)\) and the second column vector of \(\mjs_+(t,x,\lz)\)
can be continued analytically in \(\lz\in \C_+\) and continuously in
\(\lz\in \C_+\cup\R\).
\end{lem}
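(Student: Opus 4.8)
The plan is to fix $t$ as a parameter and to solve each Volterra equation in \eqref{eq:Vie} \emph{column by column}, since the column structure is exactly what separates the two half-plane statements. Writing $\theta(0,x-y,\lz)=\lz(x-y)$ and using that $Q$ is off-diagonal, conjugation by $e^{\mp i\lz(x-y)\st}$ multiplies the two relevant entries of $Q\mjs_-$ by $e^{\pm 2i\lz(x-y)}$. Carrying this out for $\mjs_-$ (integration over $y<x$, so $x-y>0$) shows that in the equation for the first column the only exponential factor is $e^{2i\lz(x-y)}$, whose modulus $e^{-2\Im(\lz)(x-y)}$ is at most $1$ precisely for $\lz\in\C_+\cup\R$; for the second column the factor is $e^{-2i\lz(x-y)}$, bounded for $\lz\in\C_-\cup\R$. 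The analogous computation for $\mjs_+$ (integration over $y>x$) reverses the roles, which is exactly the column/half-plane assignment claimed in the statement.

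First I would establish existence and uniqueness for a fixed $\lz$ in the closed half-plane where the relevant kernel is bounded. I would set up the standard Picard iteration, with zeroth iterate the constant column coming from $\I$, and estimate the $n$-th correction. Because each successive integral is a Volterra integral over the simplex $\{x>y_1>\dots>y_n\}$ and the exponential factor has modulus at most $1$ there, the $n$-th term is bounded in $L^\infty$ by $\tfrac{1}{n!}\big(\int_{\R}|u(t,y)|\,\id y\big)^n$. Here I would first record that $H^{1,1}(\R)\subset L^1(\R)$: by Cauchy--Schwarz $\int|u|\le\|(1+x^2)^{-1/2}\|_{L^2}\,\|(1+x^2)^{1/2}u\|_{L^2}$ and the second factor is finite since $u$ and $xu$ lie in $L^2$. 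The factorial bound makes the Neumann series converge absolutely and uniformly in $x$, yielding a solution in $L^\infty(\R)$; uniqueness follows by applying the same estimate to the difference of two solutions, which satisfies the homogeneous equation and is therefore forced to vanish.

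Finally I would upgrade the pointwise-in-$\lz$ result to the analyticity statement. For $\lz$ in the relevant open half-plane each Picard iterate is analytic in $\lz$: the integrand is analytic in $\lz$ (the exponential is entire and $u$ is $\lz$-independent) and, thanks to the modulus bound $\le 1$ on the exponential factor, the integral over the real $y$-interval of a locally uniformly bounded analytic family is again analytic, e.g.\ by Morera's theorem together with Fubini. Since the factorial bound above is locally uniform in $\lz$ on the closed half-plane, the Neumann series converges locally uniformly there; by the Weierstrass convergence theorem the limit is analytic in the open half-plane, and the same uniform convergence gives continuity up to the boundary $\R$ (where the exponential factor still has modulus $1$).

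The step I expect to be the main obstacle is making the convergence genuinely uniform over the \emph{unbounded} spatial variable and locally uniform in $\lz$ simultaneously, so that both the $L^\infty$ existence and the analyticity/continuity up to $\R$ follow from a single estimate. This hinges on the two facts that the exponential factor is bounded by $1$ on the closed half-plane (which decouples the analyticity from the growth in $x$) and that $\int_\R|u|<\infty$ coming from the embedding $H^{1,1}(\R)\subset L^1(\R)$; once these are in place the simplex/factorial estimate does all the remaining work.
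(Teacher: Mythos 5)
The paper states this lemma without proof, citing the standard inverse-scattering treatment in \cite{A}, and your argument is exactly that standard proof: column-wise reduction of the Volterra equations (the off-diagonal structure of \(Q\) leaves a single exponential factor \(e^{\pm 2i\lz(x-y)}\) per column, whose sign combined with the orientation of the integration range gives precisely the half-plane assignment in the statement), the embedding \(H^{1,1}(\R)\subset L^1(\R)\) via Cauchy--Schwarz, the \(1/n!\) simplex bound making the Neumann series converge uniformly in \(x\) and locally uniformly in \(\lz\) on the closed half-plane, and Morera/Weierstrass for analyticity in the open half-plane with continuity up to \(\R\). The details check out, so your proposal is correct and coincides with the proof the paper implicitly relies on.
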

Analogously, the columns of \(\js_\pm(t,x,\lz)\) can be continued analytically and
continuously into the complex \(\lz\)-plane, \(\js_-^{(2)}\) and \(\js_+^{(1)}\)
can be continued analytically in \(\lz\in \C_-\) and continuously in \(\lz\in \C_-\cup\R\),
while \(\js_-^{(1)}\) and \(\js_+^{(2)}\) can be continued analytically in
\(\lz\in \C_+\) and continuously in \(\lz\in \C_+\cup\R\).

The limits of the Jost solutions and the zero trace of the matrix \(U\) gives
\(\det \js_\pm = 1\) for all \(x\in\R\). Further, \( \js_\pm\) are both fundamental
matrix solutions to the Lax system \eqref{eq:jost}, so there exists an \(x\)- and
\(t\)-independent matrix \(A(\lz)\) such that
\begin{equation*}
   \js_-(t,x,\lz)=\js_+(t,x,\lz)A(\lz),\qquad \lz\in\R.
\end{equation*}
The scattering matrix \(A\) is determined by this system and therefore we can also
write \(A(\lz) = (\js_+(t,x,\lz))^{-1}\js_-(t,x,\lz)\), whereas its
entries can be written in terms of Wronskians. In particular,
\(a_{11}(\lz) = \det[\js_-^{(1)}|\js_+^{(2)}]\) and \(a_{22}(\lz) =
-\det[\js_-^{(2)}|\js_+^{(1)}]\), implying that they can respectively be
continued in \(\lz\in\C_+\) and \(\lz\in\C_-\). The eigenfunction inherit the
symmetry relation of the Lax pair
\begin{equation}\label{eq:jxsym}
  \js_\pm(t,x,\lz)= -\sigma \bigl(\js_\pm(t,x,\lz^\ast)\bigr)^\ast \sigma,
\end{equation}
which directly gives \(a_{22}(\lz) = a_{11}^\ast(\lz^\ast)\) and
\(a_{21}(\lz)=-a_{12}^\ast(\lz)\). The asymptotic behavior of the modified
Jost functions and scattering matrix as \(\lz\to\infty\) is
\begin{align*}
  \mjs_- & =\I+\frac{1}{2i\lz}\st Q+ \frac{1}{2i\lz}\st
  \int_{-\infty}^{x} |u(t,y)|^2 \id y+\mathcal{O}(1/\lz^2), \\
  \mjs_+ & =\I+\frac{1}{2i\lz}\st Q- \frac{1}{2i\lz}\st
  \int_{x}^{\infty} |u(t,y)|^2 \id y+\mathcal{O}(1/\lz^2),
\end{align*}
and \(A(\lz) = \I + \mathcal{O}(1/k)\).

Let \(u(t,\cdot)\in H^{1,1}(\R)\) be generic.
That is, \(a_{11}(\lz)\) is nonzero in \(\overline{\C_+}\) except at a finite
number of points \(\lz_1,\dots,\lz_N \in\C_+\), where it has simple zeros
\(a_{11}(\lz_j)=0\), \(a'_{11}(\lz_j)\neq0\), \(j=1,\dots,N\).
This set of generic functions \(u(t,\cdot)\) is an open dense subset of
\(H^{1,1}(\R)\) usually denoted by \(\mathcal{G}\). By the symmetry mentioned above,
\(a_{11}(\lz_j)=0\) if and only if \(a_{22}(\lz_j^\ast)=0\)
for all \(j=1,\dots, N\). At these zeros of \(a_{11}\) and \(a_{22}\),
we obtain for the Wronskians the following relation for \(j=1,\dots, N\),
\begin{equation}\label{eq:bj}
  \js_-^{(1)}(t,x,\lz_j)=b_j\js_+^{(2)}(t,x,\lz_j),\quad
  \js_-^{(2)}(t,x,\bar{\lz}_j)=\bar{b}_j\js_+^{(1)}(t,x,\bar{\lz}_j),
\end{equation}
where we defined \(\bar{\lz}_j=\lz_j^\ast\). Whereas for \(j=1,\dots,N\),
the relations then provide residue relations used in the inverse scattering method
\begin{equation*}
\begin{aligned}
  \Res_{\lz=\lz_j} \Bigl(\frac{\mjs_-^{(1)}}{a_{11}}\Bigr)
  &= C_j e^{2i\theta(t,x,\lz_j)}\mjs_+^{(2)}(t,x,\lz_j), \\
  \Res_{\lz=\bar{\lz}_j} \Bigl(\frac{\mjs_-^{(2)}}{a_{22}}\Bigr)
  &= \bar{C}_j e^{-2i\theta(t,x,\bar{\lz}_j)}\mjs_+^{(1)}(t,x,\bar{\lz}_j),
\end{aligned}
\end{equation*}
where the weights are \(C_j=b_j\bigl(\frac{\id a_{11}(\lz_j)}{\id \lz}\bigr)^{-1}\) and \(\bar{C}_j=
\bar{b}_j\bigl(\frac{\id a'_{22}(\bar{\lz}_j)}{\id \lz}\bigr)^{-1}\), and they satisfy the symmetry relations
\(\bar{b}_j=-b_j^\ast\) and \(\bar{C}_j = -C_j^\ast\).

The inverse problem can be formulated using the jump matrix
\begin{equation*}
  J(t,x,\lz) =
  \begin{pmatrix}
    |\rho(\lz)|^2 & e^{-2i\theta(t,x,\lz)}\rho^\ast(\lz) \\
    e^{2i\theta(t,x,\lz)}\rho(\lz) & 0
  \end{pmatrix},
\end{equation*}
where the reflection coefficient is \(\rho(\lz) = a_{12}(\lz)/a_{11}(\lz)\)
for \(\lz\in\R\). Defining sectionally meromorphic functions
\begin{equation*}
  M_-=(\mjs_+^{(1)},\mjs_-^{(2)}/a_{22}),\qquad
  M_+=(\mjs_-^{(1)}/a_{11},\mjs_+^{(2)}),
\end{equation*}
we can give the method of recovering the solution \(u(t,x)\) from the scattering data.
\begin{rhp}\label{rhp}
For given scattering data \((\rho,\{\lz_j,C_j\}_{j=1}^N)\) as well as \(t,x\in \R\),
find a \(2\times2\)-matrix-valued function \(\C\setminus\R\ni\lz\mapsto
M(t,x,\lz)\) satisfying
\begin{enumerate}
  \item \(M(t,x,\cdot)\) is meromorphic in \(\C\setminus\R\).
  \item \(M(t,x,\lz) = 1 + \mathcal{O}(1/\lz)\) as \(|\lz|\to\infty\).
  \item Non-tangential boundary  values \(M_\pm(t,x,\lz)\) exist, satisfying the
  jump condition \(M_+(t,x,\lz)=M_-(t,x,\lz)(1+J(t,x,\lz))\) for
  \(\lz\in\R\).
  \item \(M(t,x,\lz)\) has simple poles at \(\lz_1,\dots,\lz_N,
      \bar{\lz}_1,\dots,\bar{\lz}_N\) with
  \begin{align*}
  \Res_{\lz=\lz_j} M(t,x,\lz) &=\lim_{\lz\to\lz_j} M(t,x,\lz)
  \begin{pmatrix}
    0 & 0 \\
    C_j e^{2i\theta(t,x,\lz_j)} & 0
  \end{pmatrix}, \\
  \Res_{\lz=\bar{\lz}_j} M(t,x,\lz) &=\lim_{\lz\to\bar{\lz}_j}
  M(t,x,\lz)\begin{pmatrix}
    0 & \bar{C}_j e^{-2i\theta(t,x,\bar{\lz}_j)} \\
    0 & 0
  \end{pmatrix}.
  \end{align*}
\end{enumerate}
\end{rhp}
After regularization, the Riemann--Hilbert problem \ref{rhp} can be solved via
Cauchy projectors, and the asymptotic behavior of \(M_\pm(t,x,\lz)\) as \(\lz\to \infty\)
yields the reconstruction formula
\begin{align*}
  u(t,x) & = -2i\sum_{j=1}^{N} C_j^\ast e^{-2i\theta(t,x,\lz_j^\ast)}
  [\mjs_+^\ast]_{22}(t,x,\lz_j)\\
   & \quad - \frac{1}{\pi} \int_{-\infty}^{\infty}  e^{-2i\theta(t,x,\lz)}
   \rho^\ast(\lz) [\mjs_+^\ast]_{22}(t,x,\lz) \id\lz.
\end{align*}
In the reflectionless case, we have \(\rho(\lz)=0\) for \(\lz\in\R\) and
the one-soliton solution with \(\lz_1 = \xi+i\eta\) can be calculated as
\begin{equation*}
  u(t,x)=-2i\eta\frac{C_1^\ast}{|C_1|} e^{-i(2\xi x+4(\xi^2-\eta^2)t)}
  \sech\Bigl(2\eta(x+4\xi t)-\log\frac{|C_1|}{2\eta}\Bigr).
\end{equation*}
We change the notation so that \(u(t,x) = u_{1s}(t,x;\xi,\eta,x_1,\phi_1)\) has the
following expression
\begin{equation}\label{eq:1sol}
  u_{1s}(t,x;\xi,\eta,x_1,\phi_1)=2\eta
  e^{-i(2\xi x+4(\xi^2-\eta^2)t + (\phi_1+\pi/2))}\sech(2\eta(x+4\xi t-x_1)),
\end{equation}
where \(\phi_1= \arg(C_1)\) and \(x_1=\frac{1}{2\eta}\log\frac{|C_1|}{2\eta}\).
\subsection{Dressing the boundary}\label{sc:DTB}
As mentioned in the Introduction, a new integrable boundary condition for
the NLS equation on the half-line has been obtained in
\cite{Za} by dressing a Dirichlet boundary with a ``jump-defect''.
In this section we want to introduce this model on the half-line and
compute soliton solutions via the dressing method.
Therefore, consider the NLS equation \eqref{eq:NLS} for \((t,x)\in
\R_+\times\R_+\) and complement it with boundary condition at \(x=0\),
which, in our notation, are of the following form
\begin{equation}\label{eq:nbc}
  u_x = \frac{i u_t}{2\Omega} - \frac{u \Omega}{2}
  + \frac{u|u|^2}{2\Omega}-\frac{u\alpha^2}{2\Omega},
\end{equation}
where \(\Omega = \sqrt{\beta^2 -|u|^2}\), \(\alpha\) and \(\beta\) real parameters.
Then, the NLS equation has again a corresponding Lax system and the boundary condition can be
written in the form of a boundary constraint
\begin{align}\label{eq:bcv}
  (K_0)_t(t,x,\lz)\big\rvert_{x=0}&=(V(t,x,-\lz) K_0(t,x,\lz) -
  K_0(t,x,\lz) V(t,x,\lz))\big\rvert_{x=0},
\end{align}
where the boundary matrix \(K_0(t,x,\lz)\) is given by
\begin{equation}\label{eq:K0}
  K_0= \frac{1}{(2\lambda - i|\beta|)^2-\alpha^2}
  \begin{pmatrix}
    4 \lambda^2 +4i\lambda \Omega[0]-(\alpha^2+\beta^2) &4i\lambda u[0] \\
    4i\lambda u[0]^\ast&4\lambda^2-4i\lambda\Omega[0]-(\alpha^2+\beta^2)
  \end{pmatrix}.
\end{equation}
The boundary matrix is scaled by \(((2\lambda - i|\beta|)^2-\alpha^2)^{-1}\),
so that
\begin{equation}\label{eq:Kinv}
(K_0(t,x,\lz))^{-1} = K_0(t,x,-\lz).
\end{equation}
Note that the scaling could also be chosen as \(((2\lambda+i|\beta|)^2-\alpha^2)^{-1}\)
leaving the constructed solution unchanged. The property \eqref{eq:Kinv} is needed in the
case of the half-line to properly calculate the zeros and associated kernel vectors of the
special solutions. Moreover, contrary to the boundary constraint on two half-lines, the
boundary constraint on one half-line has a limitation to the \(t\) part of the Lax pair,
as already mentioned in \cite{G}. Nevertheless, it is possible to compute soliton solutions
in this model. Therefore, we introduce the relation of the boundary matrix \(K_0(t,x,\lz)\)
to defect matrices, which are linear in \(\lz\), see \cite{G}.
\begin{prop}
The boundary matrix \(K_0(t,x,\lz)\) can be viewed, up to a function of \(\lambda\),
as product of two defect matrices
\begin{equation*}
2\lambda G_{0,\alpha}(t,x,\lz)=2\lambda\I +
  \begin{pmatrix}
    \alpha\pm i\sqrt{\beta^2-|u|^2} & i u \\
    i u^\ast & \alpha\mp i\sqrt{\beta^2-|u|^2}
  \end{pmatrix},
\end{equation*}
where \(\alpha,\beta\in\R\setminus\{0\}\) and \(\tilde{u}\) is subject to the
Dirichlet boundary condition. In fact,
\begin{equation*}
  ((2\lambda - i\beta)^2-\alpha^2) K_0(t,x,\lz) = 4\lambda^2 G_{0,\alpha}(t,x,\lz) G_{0,-\alpha}(t,x,\lz).
\end{equation*}
\end{prop}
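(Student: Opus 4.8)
The plan is to prove the identity by a direct matrix computation, since both sides are explicit rational functions of $\lambda$ with matrix coefficients. First I would clear the scalar prefactor: multiplying $K_0$ by its denominator reduces the assertion to showing that the numerator matrix appearing in \eqref{eq:K0}, namely
\begin{equation*}
  \begin{pmatrix}
    4\lambda^2 +4i\lambda \Omega[0]-(\alpha^2+\beta^2) & 4i\lambda u[0] \\
    4i\lambda u[0]^\ast & 4\lambda^2-4i\lambda\Omega[0]-(\alpha^2+\beta^2)
  \end{pmatrix},
\end{equation*}
coincides with $4\lambda^2 G_{0,\alpha}G_{0,-\alpha}$, all quantities being evaluated at $x=0$ so that $u=u[0]$ and $\Omega=\Omega[0]$.

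To organize the expansion I would write the two defect matrices as $G_{0,\alpha}=\I+\tfrac{1}{2\lambda}A$ and $G_{0,-\alpha}=\I+\tfrac{1}{2\lambda}B$, where $A$ and $B$ are the $\lambda$-independent matrices
\begin{equation*}
  A=\begin{pmatrix} \alpha+ i\Omega & iu \\ iu^\ast & \alpha - i\Omega \end{pmatrix},\qquad
  B=\begin{pmatrix} -\alpha+ i\Omega & iu \\ iu^\ast & -\alpha - i\Omega \end{pmatrix}
\end{equation*}
obtained from the fixed (upper) sign choice in the statement and its image under $\alpha\mapsto-\alpha$. Then
\begin{equation*}
  4\lambda^2 G_{0,\alpha}G_{0,-\alpha}=4\lambda^2\I+2\lambda(A+B)+AB,
\end{equation*}
and it remains to match the three powers of $\lambda$ separately against the numerator matrix above.

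The coefficient of $\lambda^2$ is $4\I$ on both sides, so that term is automatic. For the linear term, a short computation gives $A+B=\left(\begin{smallmatrix} 2i\Omega & 2iu \\ 2iu^\ast & -2i\Omega\end{smallmatrix}\right)$, in which the $\pm\alpha$ contributions cancel, so that $2\lambda(A+B)$ reproduces exactly the $4i\lambda u$ off-diagonal entries together with the $\pm4i\lambda\Omega$ on the diagonal. The $\lambda$-independent term $AB$ must therefore equal $-(\alpha^2+\beta^2)\I$; this is the only place where the defining relation of $\Omega$ enters. Carrying out the product, the off-diagonal entries of $AB$ vanish identically, while each diagonal entry collapses to $-\alpha^2-\Omega^2-|u|^2$, which equals $-(\alpha^2+\beta^2)$ precisely because $\Omega^2=\beta^2-|u|^2$.

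I expect no genuine obstacle here: the statement is an algebraic identity and its proof is a verification. The only points requiring care are bookkeeping rather than substance, namely that the $\pm/\mp$ sign choice made in $G_{0,\alpha}$ must be used consistently in $G_{0,-\alpha}$ (the opposite global sign merely flips the sign of $\Omega[0]$ on the diagonal and corresponds to the alternative scaling noted after \eqref{eq:Kinv}), and that the factor $\beta$ versus $|\beta|$ in the cleared prefactor is immaterial, since only $\beta^2$ survives in the matrix entries.
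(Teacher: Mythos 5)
Your verification is correct: the decomposition $4\lambda^2 G_{0,\alpha}G_{0,-\alpha}=4\lambda^2\I+2\lambda(A+B)+AB$ with $A+B=2i\Omega\sigma_3+2iQ_{\mathrm{off}}$ and $AB=-(\alpha^2+\beta^2)\I$ reproduces exactly the numerator matrix of \eqref{eq:K0}, and the paper itself states this proposition without proof (it is a direct algebraic identity imported from the defect-matrix setting of \cite{G}), so there is nothing to contrast your argument with. The one point worth stating slightly more carefully is the prefactor: the cleared factor $(2\lambda-i\beta)^2-\alpha^2$ agrees with the denominator $(2\lambda-i|\beta|)^2-\alpha^2$ of \eqref{eq:K0} only for $\beta>0$, and for $\beta<0$ the two sides differ by a scalar rational function of $\lambda$ --- which is harmless, but is covered by the proposition's ``up to a function of $\lambda$'' rather than by the observation that only $\beta^2$ enters the matrix entries.
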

In particular, it is of importance that the product \(K_0(t,x,\lz)\) of the two defect
matrices \(G_{0,\alpha}(t,x,\lz)\) and \(G_{0,-\alpha}(t,x,\lz)\) is commutative.
Thereby, it is comprehensible that a kernel vector for each of the matrices
\(G_{0,\alpha}(t,x,\lz)\) and \(G_{0,-\alpha}(t,x,\lz)\) at particular, different
\(\lz_1, \lz_2\) introduce the same kernel vectors for \(K_0(t,x,\lz)\) at these
values of \(\lz\). In this approach we will leave out boundary-bound soliton solutions.
This is due to the fact that the number of zeros
and associated kernel vectors given through the special solutions is halved
when working with boundary-bound soliton solutions on one half-line.
Referring to the analysis in \cite{G}, we also introduce the function space
\(X = \{f\in H^{1,1}_t(\R_+), \partial_x f \in H^{0,1}_t(\R_+)\}\), where \(H^{0,1}_t(\R_+) =
\{f\in L^2(\R_+)\colon t f\in L^2(\R_+)\}\) and \(H^{1,1}_t(\R_+) = \{f\in L^2(\R_+)\colon
\partial_t f, t f\in L^2(\R_+)\}\). As in the reference, this function space is essential
when it comes to identifying the exact signs of the entries in the diagonal of the constructed
boundary matrix corresponding to the dressed solution.
\begin{prop}\label{p:Nmirsol}
Consider a solution \(u[0](t,x)\) to the NLS equation on the half-line subject to
the new boundary conditions \eqref{eq:bcv} with parameters \(\alpha, \beta
\in \R \setminus\{0\}\) and at \(x=0\) in the function space \(u[0](\cdot,0)\in X\).
Take two solutions \(\{\josts_0, \widehat{\josts}_0\}\)
of the undressed Lax system corresponding to \(u[0]\) for \(\lambda=\lambda_0=
-\frac{\alpha+i\beta}{2}\) and \(\lambda=\widehat{\lambda}_0= -\lambda_0\).
Further, take \(N\) solutions \(\josts_j\) of the undressed Lax system
corresponding to \(u[0]\) for distinct \(\lambda= \lambda_j\in \C \setminus
\bigl(\R\cup i\R\cup\{\lambda_0,\lambda_0^\ast,-\lambda_0,-\lambda_0^\ast\}\bigr)\),
\(j=1,\dots,N\). Constructing \(K_0(t,x,\lz)\) as in \eqref{eq:K0} with \(u[0]\),
\(\alpha\) and \(\beta\), we assume that there exist paired solutions \(\widehat{\josts}_j\)
of the undressed Lax system corresponding to \(u[0]\) for \(\lambda=\widehat{\lambda}_j=
-\lambda_j\), \(j=1,\dots,N\), satisfying
\begin{equation}\label{eq:ssG}
\widehat{\josts}_j\big\rvert_{x=0} = (K_0(t,x,\lz_j) \josts_j)\big\rvert_{x=0},
\quad \widehat{\lz}_k\neq\lz_j.
\end{equation}
Then, a \(2N\)-fold Darboux transformation \(D[2N]\) using
\(\{\josts_1,\widehat{\josts}_1,\dots,\josts_N,\widehat{\josts}_N\}\)
and their respective spectral parameter lead to the solution \(u[2N]\)
to the NLS equation on the half-line. In particular, the boundary condition
is preserved and we denote such a solution \(u[2N]\) by \(\widehat{u}[N]\).
\end{prop}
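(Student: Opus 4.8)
The plan is to follow the logic of the ``dressing the boundary'' method: first establish by general covariance that the $2N$-fold Darboux transformation produces a new NLS solution, then show that the boundary constraint \eqref{eq:bcv} is inherited by a conjugated boundary matrix, with the pairing condition \eqref{eq:ssG} forcing that matrix to retain the correct rational form and the special solutions together with the space $X$ pinning down the parameters and signs.

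First I would invoke the standard theory of the Darboux transformation. The matrix $D[2N](t,x,\lz)$ is determined (up to normalization) by requiring that its kernel at $\lz_j$ is spanned by $\josts_j$ and at $\widehat{\lz}_j=-\lz_j$ by $\widehat{\josts}_j$, for $j=1,\dots,N$. This $D[2N]$ intertwines the two Lax systems,
\[
  D[2N]_x = U[2N]\,D[2N] - D[2N]\,U[0],\qquad
  D[2N]_t = V[2N]\,D[2N] - D[2N]\,V[0],
\]
where $U[2N],V[2N]$ have the same form as $U,V$ with $u[0]$ replaced by a potential $u[2N]$; the compatibility $U[2N]_t - V[2N]_x + [U[2N],V[2N]]=0$ then forces $u[2N]$ to solve the NLS equation \eqref{eq:NLS}. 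I would only quote this part.

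The substance is the boundary, and here I would use that the constraint involves only the $t$-part $V$ at $x=0$. Define the candidate dressed boundary matrix by conjugation,
\[
  K_0[2N](t,x,\lz) = D[2N](t,x,-\lz)\,K_0(t,x,\lz)\,D[2N](t,x,\lz)^{-1},
\]
the $-\lz$ in the left factor being dictated by the $\lz\mapsto-\lz$ symmetry of the constraint and by the choice $\widehat{\lz}_j=-\lz_j$. Differentiating in $t$ at $x=0$, inserting the two intertwining relations (one at $\lz$, one at $-\lz$) together with the boundary constraint $(K_0)_t = V[0](-\lz)K_0 - K_0 V[0](\lz)$ for $K_0$, the cross terms cancel in pairs and one obtains $(K_0[2N])_t = V[2N](-\lz)K_0[2N] - K_0[2N]V[2N](\lz)$ at $x=0$, i.e. the dressed constraint. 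At this stage $K_0[2N]$ is only a rational function of $\lz$ with apparent poles at $\pm\lz_j$, coming from $D(-\lz)$ and from $D(\lz)^{-1}$. The key step is to show these poles are removable: evaluating the residue at each such point, the kernel relation for $D[2N]$ together with the pairing $\widehat{\josts}_j|_{x=0}=(K_0(\lz_j)\josts_j)|_{x=0}$ from \eqref{eq:ssG} makes the numerator annihilate the offending kernel vector, so every residue vanishes at $x=0$. I expect this pole cancellation to be the main obstacle, since it must be verified simultaneously at the mirror points $\lz_j$ and $-\lz_j$ and relies on \eqref{eq:Kinv} to control $D(\lz)^{-1}$.

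Finally I would identify the now-regular $K_0[2N]$ with the boundary matrix \eqref{eq:K0} built from $u[2N]$ and the \emph{same} $\alpha,\beta$. Matching the $\lz\to\infty$ asymptotics fixes the off-diagonal entries as $4i\lz u[2N]$, while the involution $K_0[2N](\lz)^{-1}=K_0[2N](-\lz)$ inherited from \eqref{eq:Kinv} together with the factorization of $K_0$ into the two defect matrices $G_{0,\alpha},G_{0,-\alpha}$ constrains the diagonal to the form $4\lz^2\pm 4i\lz\,\Omega[2N]-(\alpha^2+\beta^2)$. The special solutions $\josts_0,\widehat{\josts}_0$ at $\lz_0=-(\alpha+i\beta)/2$ and $-\lz_0$ pin the behaviour at the distinguished points where the scaling $((2\lz-i|\beta|)^2-\alpha^2)^{-1}$ degenerates, guaranteeing that $\alpha,\beta$ are preserved rather than deformed. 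The last delicate point is the sign of $\Omega[2N]=\sqrt{\beta^2-|u[2N]|^2}$ in the diagonal entries; this is exactly where $u[2N](\cdot,0)\in X$ is used, to select the correct branch of the square root and thereby confirm that $u[2N]=\widehat{u}[N]$ satisfies \eqref{eq:nbc} with the original parameters.
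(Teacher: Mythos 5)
Your proposal is correct in outline but takes a genuinely different route through the central step. The paper does not define $K_N$ by conjugation and then remove poles; it constructs $K_N(t,x,\lz)=\lz^2\I+\lz K^{(1)}+K^{(0)}$ \emph{a priori} as a two-fold Darboux transformation at $\lz_0$ and $-\lz_0$, with kernel vectors obtained by pushing forward via $D[2N]$ the kernel vectors of $K_0$ inherited from its factorization into the defect matrices $G_{0,\pm\alpha}$ (with a two-case analysis according to whether those kernel vectors are combinations of $\js_0,\jss_0$). The identity \eqref{eq:DK} is then proved by a counting argument: both sides are degree-$(2N+2)$ matrix polynomials with leading coefficient $\I$, the special solutions and their $\sigma_2$-conjugates supply $4N$ common kernel vectors, and the four extra agreement conditions at $\pm\lz_0,\pm\lz_0^\ast$ provided by the construction kill the degree-$(2N+1)$ difference. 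Your conjugation definition makes the intertwining identity and the dressed constraint automatic and replaces the counting argument by pole removability; the residue computations you describe at $\lz_j$ and $-\lz_j$ (using \eqref{eq:ssG} and \eqref{eq:Kinv}) are exactly the paper's $4N$ zeros in disguise, but you must also kill the residues at $\lz_j^\ast$ and $-\lz_j^\ast$ via the $\sigma_2$-symmetry of $D[2N]$ and $K_0$, which you do not mention. What your route buys is that the counting problem and the two-case construction of $K_N$ disappear; what it obscures is the role of $\js_0,\widehat{\js}_0$, which in the paper are the raw material for $K_N$ and for the sign analysis, and which in your argument do essentially nothing.

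Two steps are thinner than they can afford to be. First, the identification of the diagonal: the relation $K_N(\lz)K_N(-\lz)=\det K_0(\lz)\,\I$ together with the $\sigma$-symmetry and $K^{(1)}_{12}=iu[2N]$ does force $K^{(0)}=-\tfrac{\alpha^2+\beta^2}{4}\I$ and $K^{(1)}_{11}=\pm i\Omega[2N]$ (the opposite sign of $K^{(0)}$ would give $(K^{(1)})^2=\alpha^2\I$, which is incompatible with a nonzero off-diagonal and the $\sigma$-symmetry), so your claim can be made rigorous, but you should say this much; the paper instead extracts it from the coefficient identities $L_{2N+1}=R_{2N+1}$, $L_{2N}=R_{2N}$, $L_0=R_0$ and the determinant expansion \eqref{eq:Kdet}. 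Second, and more seriously, ``$u[2N](\cdot,0)\in X$ selects the correct branch of the square root'' states the conclusion rather than an argument. The paper's mechanism is a long-time limit: membership in $X$ forces $u[2N](t,0)\to0$, hence $K_0(t,0,\lz)$ and $D[2N](t,0,\lz)$ become diagonal as $t\to\infty$, the kernel vectors of $K_0$ tend to $e_1$ or $e_2$ according to the sign of $\beta$, and the transported kernel vectors of $K_N$ inherit this behaviour, pinning the $(11)$-entry of $K^{(1)}$ to $+i\Omega[2N]$. In your framework the analogous statement concerns how the factors $\lz\mp\lz_j$, $\lz\mp\lz_j^\ast$ distribute over the diagonal of $D[2N](t,0,\lz)$ as $t\to\infty$; without such an argument the sign is not determined and the boundary condition is only shown to be preserved up to the branch of $\Omega$.
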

\begin{proof}
With \(\{\josts_1,\widehat{\josts}_1,\dots,\josts_N, \widehat{\josts}_N\}\),
we have \(2N\) linear independent solutions to the undressed Lax system
\eqref{eq:jost}, since \(\lz_1,\dots,\lz_N,\widehat{\lz}_1,\dots,\widehat{\lz}_N\)
are distinct due to \(\widehat{\lz}_k\neq\lz_j\). Therefore, the Darboux transformation
is uniquely determined and the dressed solution \(\widehat{u}[N]\) satisfies the NLS equation.

In order to proof that there is a matrix \(K_N(t,x,\lz)\) satisfying
\begin{equation*}
  (K_N)_t(t,x,\lz)\big\rvert_{x=0}=(V[2N](t,x,-\lz)K_N(t,x,\lz)-K_N(t,x,\lz) V[2N](t,x,\lz))\big\rvert_{x=0},
\end{equation*}
it is of advantage to consider the equivalent equality
\begin{equation}\label{eq:DK}
  (D[2N](t,x,-\lz) K_0(t,x,\lz))\big\rvert_{x=0} = (K_N(t,x,\lz) D[2N](t,x,\lz))\big\rvert_{x=0},
\end{equation}
where we need to remark that \(K_0(t,x,\lz)\) is multiplied by \(((2\lz-i|\beta|)^2-\alpha^2)/4\)
to simplify further notation. In view of this equality, it becomes plausible to assume that the matrix,
we wish to find, is of second order in \(\lz\), i.e.\@ \(K_N(t,x,\lz)=\lz^2\I + \lz K^{(1)}(t,x)
+K^{(0)}(t,x)\). Our goal will be to construct this matrix \(K_N(t,x,\lz)\) as a Darboux transformation
with spectral parameters \(\lz_0\) and \(-\lz_0\) and corresponding kernel vectors which we need to
determine in the following paragraph. We will restrict the argumentation to one of the spectral
parameters \(\lz_0\) and note that it can be reproduced analogously for the other one \(-\lz_0\).

Since, up to a function of \(\lz\), \(K_0(t,x,\lz)= G_{0,\alpha}(t,x,\lz) G_{0,-\alpha}(t,x,\lz)\),
we can deduce as in \cite{G} that there exist two vectors \(\upsilon_0\) and \(\widehat{\upsilon}_0\)
at two spectral parameters respectively \(\lz_0\) and \(\widehat{\lz}_0= -\lz_0\) for which
\begin{equation*}
  G_{0,\alpha}(t,x,\lz_0)\upsilon_0=0, \quad
  G_{0,-\alpha}(t,x,\widehat{\lz}_0)\widehat{\upsilon}_0=0.
\end{equation*}
Therefore, \(K_0(t,x,\lz)\) can be seen as two-fold dressing matrix with the
inherited kernel vectors of \(G_{0,\alpha}\) and \(G_{0,-\alpha}\) at respectively
\(\lz_0\) and \(\widehat{\lz}_0\), so that
\begin{equation*}
  K_0(t,x,\lz_0)\upsilon_0=0, \quad
  K_0(t,x,\widehat{\lz}_0)\widehat{\upsilon}_0=0.
\end{equation*}
These kernel vectors \(\upsilon_0\) and \(\widehat{\upsilon}_0\) are either linear dependent
or linear independent of \(\{\js_0, \widehat{\js}_0,\jss_0, \widehat{\jss}_0\}\). Further,
these vectors will serve as a means to construct the kernel vectors for the dressing matrix \(K_N(t,x,\lz)\).
Thereby, we distinguish the two cases:
\begin{enumerate}
\item
The kernel vector \(\upsilon_0\) of \(K_0(t,x,\lz_0)\) can be expressed as a linear combination of
\(\{\js_0,\jss_0\}\) at \(\lz=\lz_0\) and \(x=0\). Then, w.l.o.g.\@ \(\upsilon_0 = \js_0\), again to simplify notation.
Since \(\js_0\) is linearly independent of \(\js_1,\dots,\js_N\), it is possible to define a new vector
\begin{equation*}
  \upsilon'_0 = D[2N](t,x,\lz_0) \upsilon_0,
\end{equation*}
which will serve as one of the kernel vectors for the dressing matrix \(K_N(t,x,\lz)\).
It is important to note that constructing \(K_N(t,x,\lz)\) in this manner will result in
the following relations for the vector \(\js_0\) and the orthogonal vector \(\jss_0\) at \(x=0\):
\begin{align}\label{eq:KN1p1}
  \begin{aligned}
  &D[2N](t,x,-\lz_0)K_0(t,x,\lz_0)\js_0=K_N(t,x,\lz_0)D[2N](t,x,\lz_0)\js_0=0,\\
  &D[2N](t,x,-\lz_0^\ast)K_0(t,x,\lz_0^\ast)\jss_0=K_N(t,x,\lz_0^\ast) D[2N](t,x,\lz_0^\ast)\jss_0=0.
\end{aligned}
\end{align}
\item
The kernel vector \(\upsilon_0\) of \(K_0(t,x,\lz_0)\) can not be expressed as a linear combination of
\(\{\js_0,\jss_0\}\) at \(\lz=\lz_0\) and \(x=0\). In this case, making out the kernel vector directly turns out
to be not as easy. Since we assumed that the kernel vector \(\upsilon_0\) and \(\{\js_0,\jss_0\}\) are
linearly independent, we know that at \(x=0\) we can define a vector \(\widetilde{\psi}_0=K_0(t,x,\lz_0)\js_0\neq0\),
which, in particular, solves the \(t\)-part of the undressed Lax system at \(x=0\) and \(\lz=-\lz_0\),
due to \(K_0(t,x,\lz_0)\) satisfying the boundary constraint \eqref{eq:bcv}. Similarly, the relations
\eqref{eq:DN} for the dressing matrix \(D[2N](t,x,\lz)\) imply that \(\js'_0=D[2N](t,x,\lz_0)\js_0\)
and \(\widetilde{\js}'_0=D[2N](t,x,-\lz_0)\widetilde{\js}_0\) are also solutions the \(t\)-part of the
dressed Lax system at \(x=0\), \(\lz=\lz_0\) and \(\lz=-\lz_0\), respectively. Now, connecting these
three transformation, we require that there exists a matrix \(D[2N](t,x,-\lz) K_0(t,x,\lz)
(D[2N](t,x,\lz))^{-1}\), we also call \(K_N(t,x,\lz)\), which then, at \(x=0\), satisfies
\begin{align}\label{eq:KN2p1}
  \begin{aligned}
  &(D[2N](t,x,-\lz_0)K_0(t,x,\lz_0)\js_0)=
  (K_N(t,x,\lz_0)D[2N](t,x,\lz_0)\js_0)\neq0,\\
  &(D[2N](t,x,-\lz_0^\ast)K_0(t,x,\lz_0^\ast)\jss_0)=
  (K_N(t,x,\lz_0^\ast) D[2N](t,x,\lz_0^\ast)\jss_0)\neq0.
\end{aligned}
\end{align}
Further, evaluating the determinant of \(K_N(t,x,\lz)\) at the spectral parameter \(\lz_0\)
or \(\lz_0^\ast\) and \(x=0\), we obtain \(\det(K_N(t,x,\lz_0))=\det(K_N(t,x,\lz_0^\ast))=0\). This implies
that there exist two kernel vectors of \(K_N(t,x,\lz)\) corresponding to one of the spectral
parameters each. Hence, in this case we have found the kernel vector with which we want to
construct the dressing matrix \(K_N(t,x,\lz)\), whereas it then satisfies \eqref{eq:KN2p1}.
\end{enumerate}
Now, given we have constructed a \(2\)-fold Darboux transformation using \(\lz_0\) and \(-\lz_0\)
and the appropriately chosen kernel vectors through the already mentioned procedure, we want to
proof that this dressing matrix \(K_N(t,x,\lz)\) indeed satisfies the equality \eqref{eq:DK}.
First, we will write the equality as matrix polynomials of degree \(2N+2\) in \(\lz\) and
denote them as \(L(\lz)\) and \(R(\lz)\). Hence,
\begin{align*}
&L(\lz)= D[2N](t,0,-\lz) K_0(t,0,\lz) =
\lz^{2N+2} L_{2N+2}+ \lz^{2N+1} L_{2N+1}+ \dots + \lz L_1+ L_0,\\
&R(\lz)= K_N(t,0,\lz) D[2N](t,0,\lz) =
\lz^{2N+2} R_{2N+2}+ \lz^{2N+1} R_{2N+1}+ \dots + \lz R_1+ R_0.
\end{align*}
The structure of the matrices yields \(L_{2N+2} = \I = R_{2N+2}\), which also stems from the
fact that we multiplied \(K_0(t,x,\lz)\) by \(((2\lz-i|\beta|)^2-\alpha^2)/4\). Regarding the
property \eqref{eq:Kinv}, the special solutions provide \(4N\) zeros and associated kernel vectors
\begin{align*}
  &R(\lz)\big\rvert_{\lz=\lz_j}\js_j=0,
  && L(\lz)\big\rvert_{\lz=\lz_j}\js_j=0,\\
  &R(\lz)\big\rvert_{\lz=\widehat{\lz}_j}\widehat{\js}_j=0,
  && L(\lz)\big\rvert_{\lz=\widehat{\lz}_j}\widehat{\js}_j=0,
\intertext{at \(x=0\) for \(j=1,\dots,N\). For \(R(\lz)\) the equalities are clear from the definition
of the Darboux transformation and with the assumption \eqref{eq:ssG}, the equalities for \(L(\lz)\)
follow analogously. With the orthogonal vectors \(\jss_j = \sigma_2 \js_j^\ast\) and \(\widehat{\jss}_j
= \sigma_2 \widehat{\js}_j^\ast\), we obtain}
  &R(\lz)\big\rvert_{\lz=\lz_j^\ast}\jss_j=0,
  && L(\lz)\big\rvert_{\lz=\lz_j^\ast}\jss_j=0,\\
  &R(\lz)\big\rvert_{\lz=\widehat{\lz}_j^\ast}\widehat{\jss}_j=0,
  &&L(\lz)\big\rvert_{\lz=\widehat{\lz}_j^\ast}\widehat{\jss}_j=0,
\end{align*}
whereby these equalities hold at \(x=0\) for \(j=1,\dots,N\). This is however not enough to ensure equality
in \eqref{eq:DK}, since the determinant is of power \(4N+2\) in \(\lz\) and we only have \(4N\) zeros. Further,
the choice of \(K_N\) implies
\begin{align*}
  &R(\lz)\big\rvert_{\lz=\lz_0}\js_0=
  L(\lz)\big\rvert_{\lz=\lz_0}\js_0,\\
  &R(\lz)\big\rvert_{\lz=\widehat{\lz}_0}\widehat{\js}_0=
  L(\lz)\big\rvert_{\lz=\widehat{\lz}_0}\widehat{\js}_0,
\intertext{and with the orthogonal vectors \(\jss_0 = \sigma_2 \js_0^\ast\) and
\(\widehat{\jss}_0 = \sigma_2 \widehat{\js}_0^\ast\), we obtain}
  &R(\lz)\big\rvert_{\lz=\lz_0^\ast}\jss_0=
  L(\lz)\big\rvert_{\lz=\lz_0^\ast}\jss_0,\\
  &R(\lz)\big\rvert_{\lz=\widehat{\lz}_0^\ast}\widehat{\jss}_0=
  L(\lz)\big\rvert_{\lz=\widehat{\lz}_0^\ast}\widehat{\jss}_0,
\end{align*}
at \(x=0\). At this point, it is important that all vectors are linearly independent. In view of
the additional vectors from the construction of \(K_N(t,x,\lz)\), at \(\lz=\lz_0\), we see that in both
cases either a linear combination of \(\js_0\) and \(\jss_0\) or \(\js_0\) itself is given. Therefore,
this provides a linear independent vector and the same is true for the second spectral parameter.
As mentioned in the second case, these vectors are not necessarily kernel vectors for \(L(\lz)\)
and \(R(\lz)\). However, they are indeed kernel vectors of the difference \(C(\lz) = L(\lz)- R(\lz)
= \lz^{2N+1} C_{2N+1} + \dots + C_0\), which can therefore be calculated explicitly with the given
amount of zeros and kernel vectors
\begin{equation*}
  (C_{2N+1},\cdots,C_0)
  \begin{pmatrix}
    \lz_0^{2N+1} \js_0 & \cdots & (\widehat{\lz}_{N}^\ast)^{2N+1}\widehat{\jss}_{N} \\
    \vdots & \vdots & \vdots \\
    \js_0 & \cdots & \widehat{\jss}_{N}
  \end{pmatrix}=0.
\end{equation*}
Consequently, the matrix coefficients of \(C(\lz)\) are zero and so is \(C(\lz)\), implying \(L(\lz) = R(\lz)\).
Furthermore, this equality gives us that in both cases, the kernel vectors are indeed as described in the first
case equal to \(D[2N](t,x,\lz_0)\upsilon_0\) and \(D[2N](t,x,\widehat{\lz}_0)\widehat{\upsilon}_0\) with
\(\upsilon_0\) and \(\widehat{\upsilon}_0\) kernel vectors of \(K_0(t,x,\lz)\) respectively at \(\lz=\lz_0\)
and \(\lz=-\lz_0\). Even though, the linear dependence of the kernel vector to the pair \(\{\js_0,\jss_0\}\)
is not implied in both cases.

Given \(K_N(t,x,\lz)\) of the form \(\lz^2\I + \lz K^{(1)}(t,x)+K^{(0)}(t,x)\), we want to determine
the matrix coefficients in terms of the solution at \(x=0\) to confirm that the boundary conditions
are preserved. Thereby, the symmetry of \(V^\ast(t,x,\lz^\ast)= \sigma V(t,x,\lz) \sigma^{-1}\), where
\(\sigma=
\begin{pmatrix}
  0 & 1 \\
  -1 & 0
\end{pmatrix}\), is inherited by \(K_N(t,x,\lz)\), so that we can identify
\begin{equation*}
  K_N(t,0,\lz)=\lz^2 \I + \lz
  \begin{pmatrix}
    K^{(1)}_{11}(t) & K^{(1)}_{12}(t) \\
    -\bigl(K^{(1)}_{12}(t)\bigr)^\ast & \bigl(K^{(1)}_{11}(t)\bigr)^\ast
  \end{pmatrix} +
  \begin{pmatrix}
    K^{(0)}_{11}(t) & K^{(0)}_{12}(t) \\
    -\bigl(K^{(0)}_{12}(t)\bigr)^\ast & \bigl(K^{(0)}_{11}(t)\bigr)^\ast
  \end{pmatrix}.
\end{equation*}
The equality \(L_{2N+1}=R_{2N+1}\) gives  at \(x=0\) on the off-diagonal of \(K^{(1)}(t,0)\) that
\(K^{(1)}_{12}(t)=i u[2N](t,0)\) and \(K^{(1)}_{21}(t)=-\bigl(K^{(1)}_{12}(t)\bigr)^\ast = iu^\ast[2N](t,0)\).
For the entries on the diagonal of \(K^{(1)}(t,x)\), we obtain from the same equality
\begin{equation}\label{eq:K11}
\begin{aligned}
  K^{(1)}_{11}(t)&=i \Omega[0] - 2 (\Sigma_1)_{11},\\
  \bigl(K^{(1)}_{11}(t)\bigr)^\ast&=-i \Omega[0] - 2 (\Sigma^\ast_1)_{11},
\end{aligned}
\end{equation}
where \(\Sigma_1\) is defined as the matrix coefficient of \(\lz^{2N-1}\) of the matrix \(D[2N](t,x,\lz)\),
see \eqref{eq:S1}. To determine the remaining entries of the matrix coefficients,
we need to extract information from the determinant of \(K_N(t,x,\lz)\), which can be calculated as
\begin{align*}
  \det(K_N(t,0,\lz)) &=\det(D[2N](t,0,-\lz))\det(K_0(t,0,\lz))\det((D[2N](t,0,\lz))^{-1}),
  \intertext{and using that \(\det(D[2N](t,0,-\lz))= \det(D[2N](t,0,\lz))\), due to the fact
  that for each spectral parameter \(\lz_j\) we also use \(-\lz_j\) for \(j=1,\dots,N\), we obtain}
   &= \det(K_0(t,0,\lz))= \lambda^4 - \frac{\alpha^2-\beta^2}{2}\lambda^2 + \frac{(\alpha^2+\beta^2)^2}{16}.
\end{align*}
Formally, calculating the determinant of the matrix \(K_N(t,0,\lz)\) in polynomial form as above, we can
match the coefficients such that
\begin{equation}\label{eq:Kdet}
\begin{aligned}
  \tr(K^{(1)}(t,0)) &=  \vphantom{\frac{1}{1}}0,\\
  \tr(K^{(0)}(t,0))+\det({K^{(1)}}(t,0))&=-\frac{\alpha^2-\beta^2}{2},\\
  2\Re(K^{(1)}_{11}(t)\bigl(K^{(0)}_{11}(t)\bigr)^\ast) -2 \bigl(K^{(0)}_{12}(t)\bigr)^\ast
  \Im(u[2N](t,0))&=\vphantom{\frac{1}{1}}0,\\
  \det(K^{(0)}(t,0))&=\frac{(\alpha^2+\beta^2)^2}{16}.
\end{aligned}
\end{equation}
Combining the first line in \eqref{eq:Kdet} with the expressions we have for \(K^{(1)}_{11}(t)\) and its complex
conjugate, see \eqref{eq:K11}, we can deduce that \(\Re(K^{(1)}_{11}(t))=0\).
Further, evaluating the next equality of \eqref{eq:DK} at \(x=0\), which is \(L_{2N}=R_{2N}\), implies
\begin{equation*}
  L_{2N}=\Sigma_2 - i \Sigma_1
  \begin{pmatrix}
    \Omega[0] & u[0] \\
    u[0]^\ast & -\Omega[0]
  \end{pmatrix}-
  \frac{\alpha^2+\beta^2}{4} \I = \Sigma_2 + K^{(1)}(t,0)\Sigma_1 + K^{(0)}(t,0)=R_{2N},
\end{equation*}
where again \(\Sigma_2\) is the matrix coefficient of \(\lambda^{2N-2}\) of the matrix \(D[2N](t,x,\lz)\).
Matching the \((12)\)-entry of this equality, we derive
\begin{align*}
  (u[2N]-u[0])\frac{\Omega[0]}{2}-i u[0](\Sigma_1)_{11} &=
  -\frac{i}{2}K^{(1)}_{11}(t)(u[2N]-u[0]) + iu[2N](\Sigma^\ast_1)_{11}+K^{(0)}_{12}(t),
\intertext{and using the expressions in \eqref{eq:K11} we have for \((\Sigma_1)_{11}\) and
\((\Sigma^\ast_1)_{11}\), we obtain after cancellation}
  0&=K^{(0)}_{12}(t)-i u[2N] \Re(K^{(1)}_{11}(t)).
\end{align*}
However, we already calculated that \(\Re(K^{(1)}_{11}(t))\) needs to be zero in order for the
determinants to be equal. Hence, also \(K^{(0)}_{12}(t)\) and thereby the off-diagonal of
\(K^{(0)}(t,0)\) vanishes. It follows by the third equation of \eqref{eq:Kdet} that
\(\Im(K^{(0)}_{11}(t))=0\) and then, by the fourth equation we have \(K^{(0)}(t,0)=\pm
\frac{\alpha^2+\beta^2}{4}\I\). To verify that it is indeed minus as for \(K_0(t,0,\lz)\),
we confirm with the equality of \(L_0=R_0\) at \(x=0\), which is
\begin{equation*}
  -\frac{\alpha^2+\beta^2}{4}\Sigma_{2N} = K^{(0)}(t,0) \Sigma_{2N},
\end{equation*}
where \(\Sigma_{2N}\) is the zero-th order matrix coefficient of the dressing matrix \(D[2N](t,x,\lz)\).
For this to be satisfied for all \(t\in\R_+\), we need to have
\(K^{(0)}=-\frac{\alpha^2+\beta^2}{4}\I\). Thereby, we obtain
\(\tr(K^{(0)}(t,0)) = -\frac{\alpha^2+\beta^2}{2}\). Thus, the second
equation of \eqref{eq:Kdet} implies that
\begin{equation*}
  \begin{aligned}
  K^{(1)}_{11}(t) &= \pm i\sqrt{\beta^2-|u[2N]|^2},\\
   \bigl(K^{(1)}_{11}(t)\bigr)^\ast &= \mp i\sqrt{\beta^2-|u[2N]|^2}.
  \end{aligned}
\end{equation*}
Now, we need to determine the sign of the diagonal entries of \(K^{(1)}(t,0)\)
to be able to constitute that \(K_N(t,0,\lz)\) preserves the boundary constraint,
i.e.\@ we need to show that the signs coincide with the signs in the same entry
of \(K_0(t,0,\lz)\) in front of \(\Omega[0]\).

Therefore, a similar analysis as in \cite{G} is needed, where we use the fact that
under the Darboux transformation functions \(u[0](\cdot,0)\) in the function space \(X\)
are mapped onto functions, here \(u[2N](\cdot,0)\),  which lie in the function space \(X\).
Further, we have that \(K_0(t,0,\lz)\) has a positive sign in the \((11)\)-entry in front of \(\Omega[0]\).
As before, we have the kernel vectors \(\upsilon_0\) and \(\widehat{\upsilon}_0\) of \(K_0(t,x,\lz)\)
at \(x=0\) and respectively \(\lz=\lz_0\) and \(\lz=\widehat{\lz}_0\). Then, for \(K_0(t,0,\lz)\) multiplied by
\(((2\lz-i|\beta|)^2-\alpha^2)/4\), we have as \(t\) goes to infinity that
\begin{align*}
  \lim_{t\to\infty} K_0(t,0,\lz) &= \diag(\lz^2+i|\beta|\lz-\frac{(\alpha^2+\beta^2)}{4},
  \lz^2-i|\beta|\lz-\frac{(\alpha^2+\beta^2)}{4})\\
  &=
  \begin{cases}
    \diag((\lz-\lz_0)(\lz-\widehat{\lz}_0^\ast),(\lz-\lz_0^\ast)(\lz-\widehat{\lz}_0)), & \mbox{if } \beta>0,\\
    \diag((\lz-\lz_0^\ast)(\lz-\widehat{\lz}_0),(\lz-\lz_0)(\lz-\widehat{\lz}_0^\ast)), & \mbox{if } \beta<0.
  \end{cases}
\end{align*}
In turn, this implies that the limits of the kernel vectors of \(K_0(t,0,\lz)\) are
\begin{align*}
  &\upsilon_0 \sim
  \begin{cases}
    e_1, & \mbox{if } \beta>0, \\
    e_2, & \mbox{if } \beta<0,
  \end{cases}
  && \widehat{\upsilon}_0 \sim
  \begin{cases}
    e_2, & \mbox{if } \beta>0, \\
    e_1, & \mbox{if } \beta<0,
  \end{cases}
\end{align*}
as \(t\) goes to infinity, where \(e_1\) and \(e_2\) are unit vectors.
Since the dressing matrix \(D[2N](t,x,\lz)\) is also diagonal as \(t\) goes to infinity,
see \cite{G}, the kernel vectors \(\upsilon'_0\), \(\widehat{\upsilon}'_0\) of
\(K_N(t,0,\lz)\) inherit the long time behavior of their corresponding vector.
Therefore, the signs can be determined to be positive in the
\((11)\)-entry and negative in the \((22)\)-entry of \(K^{(1)}\).
In conclusion, if we assume \(u[0](\cdot,0)\in X\), we can find
a Darboux transformation \(K_N(t,x,\lz)\) for which \(V[2N]\)
satisfies \eqref{eq:bcv} regarding \(x=0\), where \(K_N(t,x,\lz)\)
is similar to \(K_0(t,x,\lz)\) with an updated function \(\widehat{u}[N]\).
\end{proof}
\begin{remark}
Similar to the analysis of the long time behavior of the kernel vectors, one could look at the
long time behavior of the dressing matrix \(D[2N](t,x,\lz)\) to deduce the same result through
the equality of \(K_N(t,x,\lz)\) with the product of matrices
\(D[2N](t,x,-\lz) K_0(t,x,\lz) (D[2N](t,x,\lz))^{-1}\). Nevertheless, this is closely related to
one another, since the limit behavior of the kernel vectors of \(D[2N](t,x,\lz)\) determines the
distribution of factors \(\lz-\lz_j\), \(\lz-\widehat{\lz}_j\), \(\lz-\lz_j^\ast\) and
\(\lz-\widehat{\lz}_j^\ast\) for \(j=1,\dots,N\) in the diagonal entries as \(t\) goes to infinity.
\end{remark}

Therewith, we have shown that the method of dressing the boundary can be applied to the new boundary
conditions constituted in \cite{Za}. Unlike in \cite{Zh}, where the boundary matrices \(G_0(\lz)\) and
\(G_N(\lz) = G_0(\lz)\) are provided at the beginning and the proof is to check that they satisfy the
equality \eqref{eq:DK} together with the dressing matrix \(D[2N](t,x,\lz)\), we only provide \(K_0(t,x,\lz)\)
and the proof is to construct a suitable boundary matrix \(K_N(t,x,\lz)\) satisfying the equality \eqref{eq:DK}.
Afterwards, we need to verify that the constructed matrix \(K_N(t,x,\lz)\) is in terms of the solution space
indeed the boundary matrix we need for the boundary constraint with respect to the dressed solution \(\widehat{u}[N]\).
The reason why we need a different approach is due to the boundary matrix. In the case of the Robin boundary condition,
the structure of the boundary matrix \(G_0(\lz)\) is such that when comparing \(L(\lz)\) with \(R(\lz)\) it is already
clear with regard to \(\lz\) that the \((2N+1)\)-th and zero-th order matrix coefficients are equal. Hence,
the zeros and associated kernel vectors of the dressing matrix \(D[2N](t,x,\lz)\) are sufficient to
derive the equality of \(L(\lz)\) and \(R(\lz)\). However defining \(K_N(t,x,\lz)\) similarly to
\(K_0(t,x,\lz)\) as in \eqref{eq:K0} with \(u[0]\) and \(\Omega[0]\) updated to \(u[2N]\) and \(\Omega[2N]\),
we only have the equality of the \((2N+1)\)-th order matrix coefficient and consequently the zeros
and associated kernel vectors of the dressing matrix \(D[2N](t,x,\lz)\) are insufficient to derive
the equality. Even though the new boundary condition does fit in the proof initially suggested
in \cite{Zh}, the Robin boundary condition very well fits into this one as we will put forward
in Appendix \ref{app:rob}.
%
%

\section{Soliton solutions}
\subsection{Relations between scattering data}\label{sc:rel}
In this section, we want to derive relations for multi-soliton solutions between the weights
\(C_j\) and \(\widehat{C}_j\), which correspond to a pair of zeros \(\lz_j\) and \(\widehat{\lz}_j\)
of \(a_{11}(\lz)\), for \(j=1,\dots,N\).

Consider the zero seed solution \(u[0]=0\) and \(\C\setminus (\R\cup i\R)\ni \lz_0 =
-\frac{\alpha + i\beta}{2}\). Also,
\begin{equation*}
  K_0(t,x,\lz) = \frac{1}{(2\lambda - i|\beta|)^2-\alpha^2}
  \begin{pmatrix}
    4 \lambda^2 +4i\lambda |\beta|-(\alpha^2+\beta^2) &0 \\
    0 &4\lambda^2-4i\lambda|\beta|-(\alpha^2+\beta^2)
  \end{pmatrix}.
\end{equation*}
Then, in order to apply Proposition \ref{p:Nmirsol}, we take distinct \(\lz=\lz_j\in\C\setminus
\bigl(\R\cup i\R\cup\{\lz_0,\lz_0^\ast,-\lz_0,-\lz_0^\ast\}\bigr)\), \(j=1,\dots,N\), and at the
corresponding spectral parameter a solution \(\js_j\), \(j=0,\dots,N\), to the Lax system regarding
\(u[0]\). The solutions \(\js_j\) to the zero seed solution are readily produced
\begin{equation*}
  \js_j = \begin{pmatrix}
            \mu_j \\
            \nu_j
          \end{pmatrix}
          = e^{-i(\lz_j x+ 2 \lz_j^2 t)\st}\begin{pmatrix}
            u_j \\
            v_j
          \end{pmatrix},
\end{equation*}
where \((u_j,v_j)^\intercal \in \C^2\) for \(j=0,\dots,N\). Particularly, the choice for \(\js_0\) will
be \((u_0,v_0)= (1,0)\) or \((u_0,v_0)= (0,1)\) respectively for \(\beta>0\) or \(\beta<0\) inspired by
the first case in the proof of Proposition \ref{p:Nmirsol}. Now, given the relation \eqref{eq:ssG}, we also
take solutions to the same Lax system at \(\lz=\widehat{\lz}_j=-\lz_j\) defined by \(\widehat{\js}_j =
e^{-i(\widehat{\lz}_j x + 2 \widehat{\lz}_j^2 t)\st} (\hat{u}_j,\hat{v}_j)^\intercal\), \(j=1,\dots,N\).
Whereas, the relation is equivalent to
\begin{equation*}
  \frac{\hat{u}_j}{\hat{v}_j} = \frac{(2\lz_j+i|\beta|)^2-\alpha^2}{(2\lz_j-i|\beta|)^2-\alpha^2}
  \frac{u_j}{v_j}, \quad j=1,\dots,N.
\end{equation*}
Note that if \(\lz_j\in \C_+\), then \(\widehat{\lz}_j\in \C_-\) which, in turn, implies
that \(\widehat{\js}_j\) has opposite limit behavior as \(\js_j\) for \(x\to\pm\infty\). Such that
in order to apply Theorem \ref{t:scadt} to the Darboux transformation corresponding to
\(\widehat{\lz}_j\) and \(\widehat{\js}_j\), we instead use the counterpart \(\widehat{\lz}_j^\ast\)
and \(\widehat{\jss}_j\). Since with \(\widehat{\lz}_j^\ast\in \C_+\), the vector
\(\widehat{\jss}_j = e^{-i(\widehat{\lz}_j^\ast x+ 2 (\widehat{\lz}_j^\ast)^2 t)\st}
(-\hat{v}_j^\ast,\hat{u}_j^\ast)^\intercal\), admits the same limit behavior as \(\js_j\)
for \(x\to\pm\infty\). Similar to Remark \ref{r:scadt} following Theorem \ref{t:scadt}, we can
deduce for a two-fold Darboux transformation consisting of \(\{\lz_1, \js_1, \widehat{\lz}_1^\ast,
\widehat{\jss}_1\}\) that the weights in the scattering data can be calculated as
\begin{equation*}
  C_1^{(2)} =  -\frac{v_1}{u_1} \frac{(\lz_1-\lz_1^\ast)(\lz_1-\widehat{\lz}_1)}{
  \lz_1-\widehat{\lz}_1^\ast},\qquad
  C_2^{(2)} =  -\frac{\hat{u}_1^\ast}{-\hat{v}_1^\ast} \frac{(\widehat{\lz}_1^\ast-\lz_1^\ast)
  (\widehat{\lz}_1^\ast-\widehat{\lz}_1)}{\widehat{\lz}_1^\ast-\lz_1}.
\end{equation*}
This results in
\begin{equation*}
  C_1^{(2)}(C_2^{(2)})^\ast = -4 \lambda_1^2\cdot \frac{(2\lz_1+i|\beta|)^2-\alpha^2}{(2\lz_1-i|\beta|)^2-\alpha^2} \cdot
  \frac{\Im(\lz_1)^2}{\Re(\lz_1)^2},
\end{equation*}
where it is obvious that the factor \(\frac{(2\lz_1+i|\beta|)^2-\alpha^2}{(2\lz_1-i|\beta|)^2-\alpha^2}\)
is the only difference to the analogous result in the case of the Robin boundary condition,
where we have \(\frac{i\alpha-2\lz_1}{i\alpha+2\lz_1}\). Nevertheless, by defining \(\lz_1 = \xi_1+i\eta_1\)
and \(\widehat{\lz}_1^\ast = -\xi_1+i\eta_1\) as well as the corresponding weights \(C_1=
2 \eta_1 e^{2\eta_1 x_1 + i \phi_1}=C_1^{(2)}\) and \(\widehat{C}_1= 2 \eta_1 e^{2\eta_1 \widehat{x}_1 +i\widehat{\phi}_1}
=C_2^{(2)}\), we obtain a relation between the initial positions and phases of the two inserted solitons
\begin{align*}
  x_1 + \widehat{x}_1 & = \frac{1}{2\eta_1}\log\Bigl(1+\frac{\eta_1^2}{\xi_1^2}\Bigr)+\frac{1}{4\eta_1}
  \log\Bigl(\frac{(4\xi_1^2-\alpha^2-(2\eta_1+|\beta|)^2)^2 +(4\xi_1 (2\eta_1+|\beta|))^2}{
  (4\xi_1^2-\alpha^2-(2\eta_1-|\beta|)^2)^2 +(4\xi_1 (2\eta_1-|\beta|))^2}\Bigr),\\
  \phi_1 - \widehat{\phi}_1 & = 2\arg(\lambda_1) + \arg\Bigl(
  \frac{4\xi_1^2-\alpha^2-(2\eta_1+|\beta|)^2 + i 4\xi_1 (2\eta_1+|\beta|)}{
  4\xi_1^2-\alpha^2-(2\eta_1-|\beta|)^2 + i 4\xi_1 (2\eta_1-|\beta|)}\Bigr)+ \pi.
\end{align*}
\begin{remark}
In general, we can construct a \(2N\)-Darboux transformation using the information given by
\(\{\lz_1, \js_1, \dots, \lz_N, \js_N, \widehat{\lz}_1^\ast, \widehat{\jss}_1,\dots,
\widehat{\lz}_N^\ast, \widehat{\jss}_N\}\), where \(\lz_j = \xi_j + i \eta_j\) and consequently
\(\widehat{\lz}_1^\ast = -\xi_j + i \eta_j\) for \(j=1,\dots,N\) with corresponding solutions to
the undressed Lax system as above. Then for \(j=1,\dots,N\), the relation for a pair of initial
positions \(x_j\) and \(\widehat{x}_j = x_{N+j}\) as well as phases \(\phi_j\) and
\(\widehat{\phi}_j = \phi_{N+j}\) amounts to
\begin{align*}
  x_j + \widehat{x}_j & = \frac{1}{2\eta_j}\log\Bigl(1+\frac{\eta_j^2}{\xi_j^2}\Bigr)+
  \frac{1}{4\eta_j}\log\Bigl(\frac{(4\xi_j^2-\alpha^2-(2\eta_j+|\beta|)^2)^2 +(4\xi_j (2\eta_j+|\beta|))^2}{
  (4\xi_j^2-\alpha^2-(2\eta_j-|\beta|)^2)^2 +(4\xi_j (2\eta_j-|\beta|))^2}\Bigr)\\
  &\quad -\frac{1}{2\eta_j} \sideset{}{'}\sum_{k=1}^{N} \log
  \frac{[(\xi_j-\xi_k)^2+(\eta_j-\eta_k)^2][(\xi_j+\xi_k)^2+(\eta_j-\eta_k)^2]}{
  [(\xi_j+\xi_k)^2+(\eta_j+\eta_k)^2][(\xi_j-\xi_k)^2+(\eta_j+\eta_k)^2]},\\
  \phi_j - \widehat{\phi}_j & = 2\arg(\lz_j) + \arg\Bigl(\frac{4\xi_j^2-\alpha^2-(2\eta_j+|\beta|)^2 +
  i 4\xi_j (2\eta_j+|\beta|)}{4\xi_j^2-\alpha^2-(2\eta_j-|\beta|)^2 + i 4\xi_j (2\eta_j-|\beta|)}\Bigr)+ \pi \\
  &\quad- \sideset{}{'}\sum_{k=1}^{N}\arg\Bigl(
  \frac{[(\xi_j-\xi_k)+i(\eta_j-\eta_k)][(\xi_j+\xi_k)+i(\eta_j-\eta_k)]}{
  [(\xi_j+\xi_k)+i(\eta_j+\eta_k)][(\xi_j-\xi_k)+i(\eta_j+\eta_k)]}\Bigr),
\intertext{whereas the product of a pair of weights \(C_j\) and \(\hat{C}_j = C_{N+j}\) is}
  C_j \hat{C}_j^\ast &=-4\lz_j^2 \frac{(2\lz_j+i|\beta|)^2-\alpha^2}{(2\lz_j-i|\beta|)^2-\alpha^2}
  \frac{(2\eta_j)^2}{(2\xi_j)^2} \Bigl[\sideset{}{'}\prod_{k=1}^{N}  \frac{(\lz_j-\lz_k^\ast)(\lz_j+\lz_k)}{
  (\lz_j-\lz_k)(\lz_j+\lz_k^\ast)}\Bigr]^2,
\end{align*}
where the prime indicates that the term with \(k=j\) is omitted from the sum and product.
\end{remark}
Incidentally by the argumentation of the proof of Proposition \ref{p:Nmirsol}, it follows that
the boundary matrix \(K_N(t,x,\lz)\) corresponding to the dressed solution \(\widehat{u}[N]\)
has kernel vectors \(\js_0' = D[2N](t,x,\lz_0)\js_0\) and the orthogonal vector as before
\(\jss_0' = D[2N](t,x,\lz_0^\ast)\jss_0\) respectively at the spectral parameter \(\lz=\lz_0\)
and \(\lz_0^\ast\) as described in the first case. Additionally, note that the second case,
in which the solution \(\js_0\) and the kernel vector \(\upsilon\) are linearly independent,
can also occur. An example is given by the non-zero seed solution \(u[0] = \rho e^{2i\rho^2 t}\)
with constant background \(\rho > 0\) in the case of the Robin boundary condition, where the
solutions to the Lax system can not be connected to the kernel vectors.
\subsection{Soliton reflection}\label{sc:sol}
The Darboux transformation presented in Appendix \ref{app:dar} gives the algebraic means to derive
\(N\)-soliton solutions simply by calculating the \((12)\)-entry of the projector matrices \((P[j])_{12}\)
for \(j=1,\dots,N\) recursively and then sum them up or by the direct calculation of the quotient of two
\(2N\times2N\) matrices, which represents the \((12)\)-entry of the sum of projector matrices, i.e.\@
\((\Sigma_1)_{12}\), as presented in \cite{Zh}. Motivated by Section \ref{sc:rel}, the pure soliton
solutions in the case of the new boundary condition, which we can obtain, are constructed by choosing
pairs of spectral parameter \(\lz_j\) and \(\widehat{\lz}_j^\ast\), \(j=1,\dots,N\), and associated
constants \(u_j\), \(v_j\), \(-\hat{v}^\ast_j\) and \(\hat{u}^\ast_j\) as explained therein.

For \(N=1\), consider the spectral parameter \(\lz_1 = \xi_1 + i \eta_1\), where it is comprehensible
that, with regard to \eqref{eq:1sol}, \(\xi_1\) and \(\eta_1\) respectively describe the velocity and
the amplitude of the physical one-soliton. Further, the quotient of the constants \(u_1\) and \(v_1\)
is highly related to the initial position \(x_1\) and phase \(\phi_1\) of the soliton. Consequently,
the mirror soliton corresponding to \(\widehat{\lz}_1^\ast=-\xi_1+i\eta_1\) has opposite velocity to
and the same amplitude as the physical soliton. Particularly, we have visualized said behavior in
Figures \ref{f:dir2} and \ref{f:new2}. Whereas, the Dirichlet boundary condition \(u(t,0)=0\) occur as
a special case of the new boundary condition \eqref{eq:nbc}, when for example \(|\alpha|\to\infty\),
\(|\beta|\to\infty\) or \(\beta\to0\). Indeed, structurally these cases correspond to the boundary matrix
\(K_0(t,x,\lz)=\I\). Thereby, we plotted in Figure \ref{f:dir2} on the left the reflection of a one-soliton
solution \(|\widehat{u}[1](t,x)|\) subject to the Dirichlet boundary condition as well as on the right a
contour plot, which includes the mirror soliton (dashed). In Figure \ref{f:new2}, we chose particular
parameter \(\alpha=1\) and \(\beta=2\) to plot an example of a one-soliton solution \(|\widehat{u}[1](t,x)|\)
in the case of soliton reflection with respect to the new boundary condition in three dimensions on the left
and as a contour plot together with the mirror soliton on the right. It is observable that in these cases
the physical soliton and the mirror soliton change roles after the usual soliton interaction with the
physical soliton visible before and the mirror soliton visible after the interaction with the boundary.
Additionally, in the case of the Dirichlet boundary condition the interaction of the pair of solitons
is such that the whole solution is zero at the boundary \(x=0\).
\begin{figure}
  \centering
  \includegraphics[width=\textwidth]{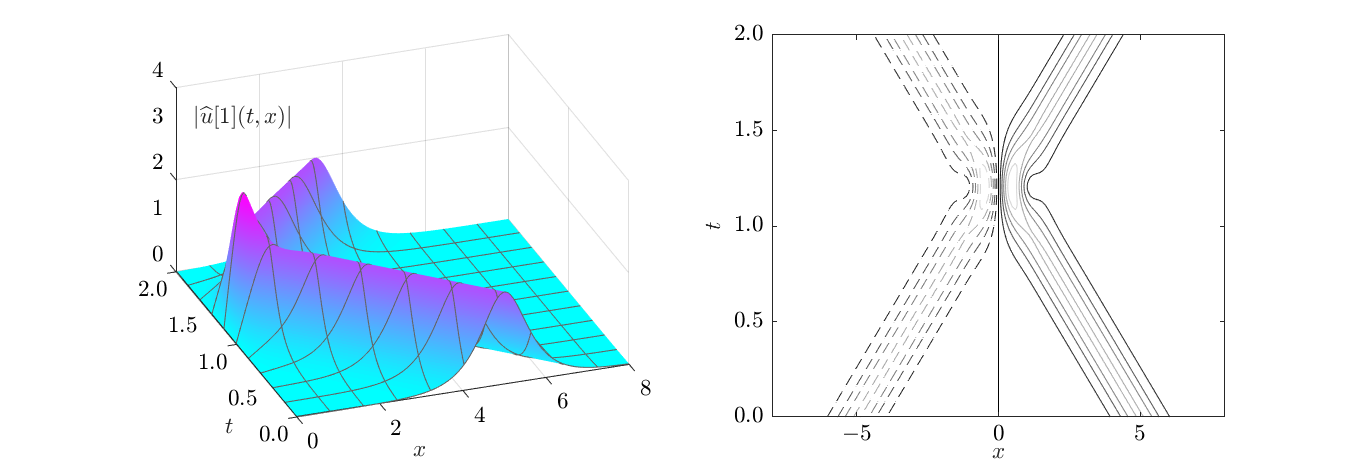}
  \caption{\small Dirichlet boundary condition at \(x=0\): one-soliton reflection with \(\xi_1 =1\), \(\eta_1 =1\),
  \(x_1 = 5\) and \(\phi_1=0\). Left: 3D plot of \(|\widehat{u}[1](t,x)|\). Right: contour plot showing the mirror
  soliton (dashed) to the left of \(x=0\).}\label{f:dir2}
\end{figure}
\begin{figure}
  \centering
  \includegraphics[width=\textwidth]{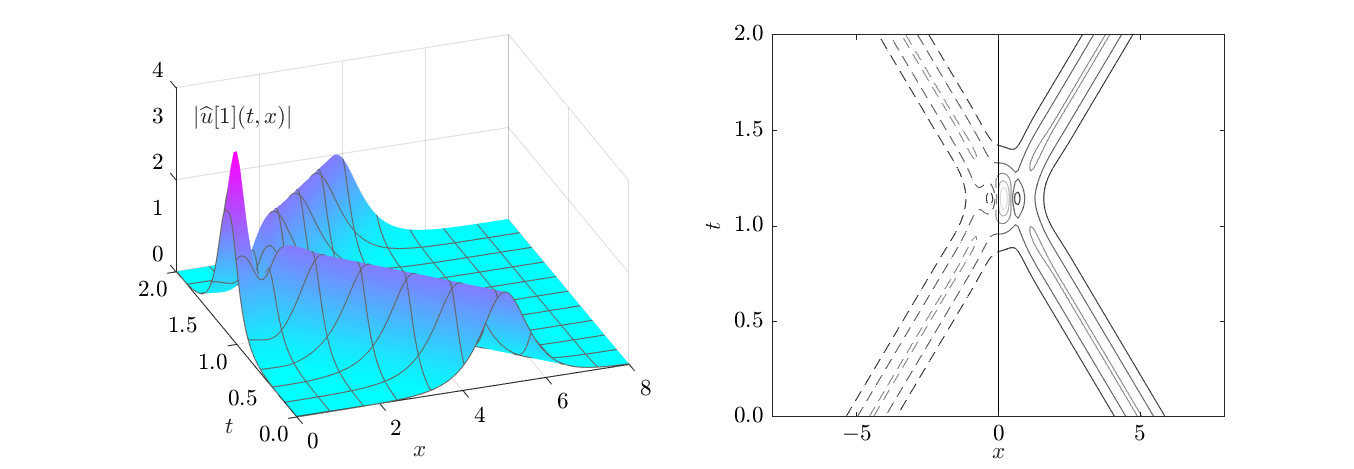}
  \caption{\small New boundary condition (\(\alpha=1\), \(\beta=2\)) at \(x=0\): one-soliton reflection with
  \(\xi_1 =1\), \(\eta_1 =1\), \(x_1 = 5\) and \(\phi_1=0\). Left: 3D plot. Right: contour plot.}\label{f:new2}
\end{figure}
\begin{figure}
  \centering
  \includegraphics[width=\textwidth]{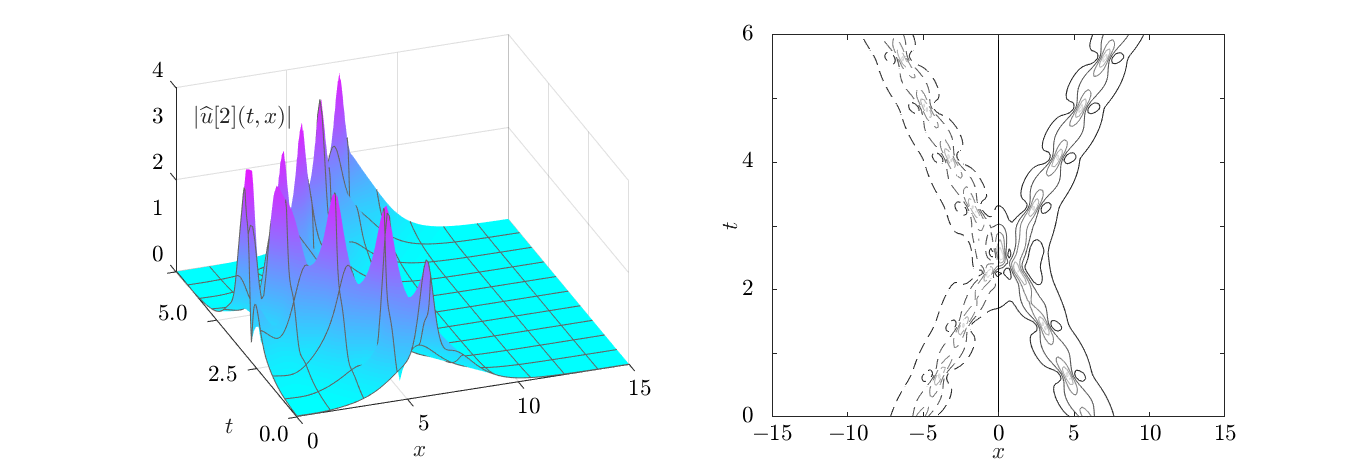}
  \caption{\small New boundary condition (\(\alpha=1\), \(\beta=2\)) at \(x=0\): two-soliton reflection with
  \(\xi_1 =\xi_2 =1/2\), \(\eta_1 =1/2\), \(\eta_2 =3/2\), \(x_1 =x_2 = 5\) and \(\phi_1=\phi_2=0\).
  Left: 3D plot. Right: contour plot.}\label{f:new4}
\end{figure}

Subsequently, we used the mentioned algorithm to include higher order soliton solutions in the results.
First of all, inspired by the breather in the case of the Dirichlet boundary condition, see Figure 6 in \cite{BiHw},
we plotted a similar breather solution as soliton reflection in the case of the new boundary condition
with parameter \(\alpha=1\) and \(\beta=2\) on the left and the contour together with the mirror soliton
on the right of Figure \ref{f:new4}. As one would suspect, the main difference can be observed
at the boundary \(x=0\). Ultimately, we went one step further and even plotted the reflection of a three-soliton
solution in the case of the new boundary condition, again with the same parameters, on the left and
its contour including the mirror soliton on the right of Figure \ref{f:new6}. The choice of parameters,
which is needed in order to comply with the conditions of Proposition \ref{p:Nmirsol}, is described
in Section \ref{sc:rel}.
\begin{figure}
  \centering
  \includegraphics[width=\textwidth]{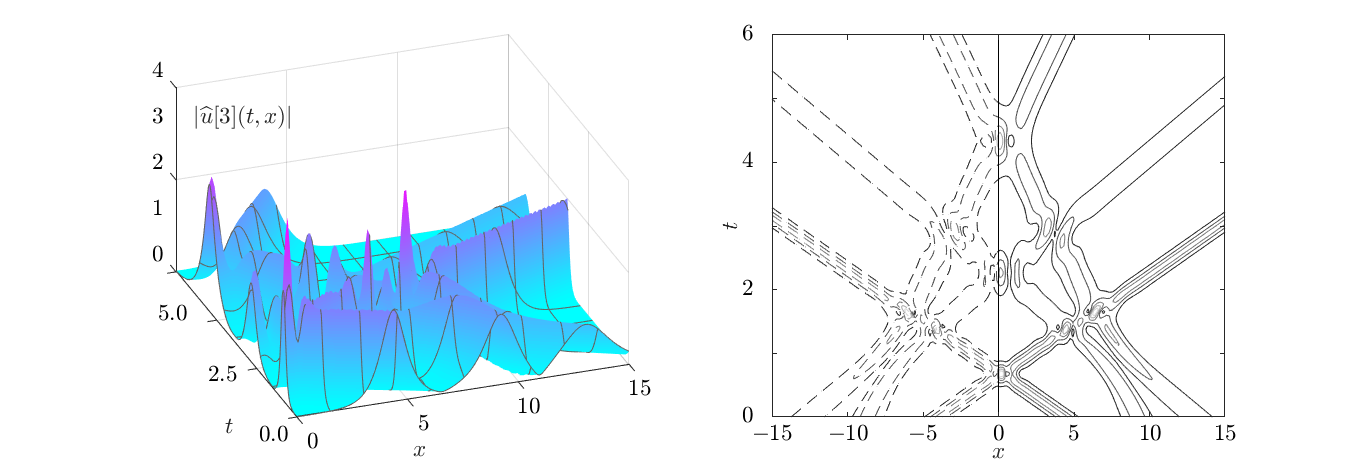}
  \caption{\small New boundary condition (\(\alpha=1\), \(\beta=2\)) at \(x=0\): three-soliton reflection with
  \(\xi_1 =3/2\), \(\xi_2 =1/2\), \(\xi_3 =5/4\), \(\eta_1 =1\), \(\eta_2 =3/4\), \(\eta_3 =1/2\), \(x_1 = 5\), \(x_2 = 8\),
  \(x_3 = 11\) and \(\phi_1=\phi_2=\phi_3=0\).
  Left: 3D plot. Right: contour plot.}\label{f:new6}
\end{figure}

\section*{Conclusion}
In this work, we further developed the method of dressing the boundary to be applicable
to the NLS equation on the half-line with the new boundary condition. The boundary condition
corresponds to a time dependent gauge transformation \eqref{eq:K0} and this time dependence
together with the polynomial degree with respect to the spectral parameter of the transformation
thin out the solution space for the new boundary condition. As we have seen in \cite{G},
for the time dependence we need the solution to go to zero as the time goes to infinity.
Moreover, the polynomial degree disables the consideration of boundary-bound solitons.
Nonetheless, we were able to show that it is possible to construct reflected pure
soliton solutions of arbitrary (even) order in this model and to visualize the result.
\newpage \begin{appendices}
\section{Darboux transformation}\label{app:dar}
The Darboux transformation can be viewed as gauge transformation acting on forms of the Lax pair
\(U\), \(V\). For that, the undressed Lax system~\eqref{eq:jost} will be written as \(U[0]\), \(V[0]\)
and \(\js[0]\) and the transformed, structural identical system as \(U[N]\), \(V[N]\) and \(\js[N]\)
with \(N\in \mathbb{N}\).
The transformed vector \(\js[N] = D[N] \js[0]\) satisfies the transformed system
\begin{equation}\label{eq:tls}
  \js[N]_x = U[N]\js[N],\quad
  \js[N]_t = V[N]\js[N],
\end{equation}
whereas they are connected by
\begin{equation}\label{eq:DN}
  D[N]_x= U[N]D[N]-D[N]U[0], \quad D[N]_t= V[N]D[N]-D[N]V[0].
\end{equation}
For given \(N\) distinct column vector solutions \(\js_j = (\mu_j,\nu_j)^\intercal\)
of the undressed Lax system \eqref{eq:jost} evaluated at \(\lz=\lz_j\), \(j=1\dots N\),
we construct an iteration of the one-fold dressing matrix \(D[1]\) in the following sense
\begin{equation*}
D[N] = ((\lambda-\lambda^\ast_N)\I+(\lambda^\ast_N-\lambda_N) P[N]) \cdots
((\lambda-\lambda^\ast_1)\I+(\lambda^\ast_1-\lambda_1) P[1]),
\end{equation*}
where \(P[j]\) are projector matrices defined by
\begin{equation}\label{eq:Pj}
P[j] = \frac{\josts_j[j-1] \josts^\dagger_j[j-1]}{\determz{\josts}{j}{[j-1]}},
\quad \josts_j[j-1]=D[j-1]\big\rvert_{\lambda=\lambda_j} \josts_j.
\end{equation}
Then to reconstruct the solution \(u[N]\), we need to insert \(\js[N] = D[N] \js[0]\) into the transformed
Lax system \eqref{eq:tls} and extract the information of the coefficient of \(\lambda^{N-1}\) of the first line.
Therefore, we need the coefficient of \(\lambda^{N-1}\) of \(D[N]\) which we denote by \(\Sigma_1\), i.e.
\begin{equation}\label{eq:S1}
  \Sigma_1=\sum_{j=1}^{N} -\lambda_j^\ast \I + (\lambda_j^\ast - \lambda_j) P[j].
\end{equation}
Consequently, the reconstruction formula can be computed as
\begin{equation*}
Q[N]=Q[0] - i \sum_{j=1}^{N} (\lambda_j-\lambda^\ast_j)[\st,P[j]].
\end{equation*}
This calculation can be used to recursively construct \(N\)-soliton solutions. Especially,
we use it to compute the solutions in Section \ref{sc:sol}. Moreover, the change of the scattering
data under Darboux transformations has been investigated, among others, for the NLS equation,
see \cite{GHZ}. With that book serving as a basis, we give a brief overlook for the relevant
theorem in our notation.
\subsection{Change of scattering data under Darboux transformations}
With scattering data \((\rho,\{\lz_j,C_j\}_{j=1}^N)\), \(\lz_j\in \C_+\) for all \(j=1,\dots,N\),
we want to give the relevant information needed to retrace the change of scattering data under
Darboux transformations. It is of renewed importance that the solution and its derivative with respect
to \(x\) connected to the scattering data is a sufficiently fast decaying function for \(|x|\to \infty\).
Then, given a spectral parameter \(\lz_0\in \C_+\setminus\{\lz_1,\dots,\lz_N\}\) and a column solution of the undressed Lax system
\begin{align*}
  \js_0 &= u_0\js_-^{(1)}(t,x,\lz_0)+v_0\js_+^{(2)}(t,x,\lz_0)\\
  &=u_0\mjs_-^{(1)}(t,x,\lz_0)e^{-i\theta(t,x,\lz_0)}+v_0\mjs_+^{(2)}(t,x,\lz_0)e^{i\theta(t,x,\lz_0)}.
\end{align*}
Defining the ratio of the second and the first component to be
\begin{equation*}
  q = \frac{[\mjs_-]_{21}(t,x,\lz_0)+\frac{v_0}{u_0}[\mjs_+]_{22}(t,x,\lz_0)e^{2i\theta(t,x,\lz_0)}}{
  [\mjs_-]_{11}(t,x,\lz_0)+\frac{v_0}{u_0}[\mjs_+]_{12}(t,x,\lz_0)e^{2i\theta(t,x,\lz_0)}},
\end{equation*}
we obtain, in turn, an expression for the ratio of \(v_0\) and \(u_0\), i.e.\@
\begin{equation}\label{eq:quo}
  -\frac{v_0}{u_0} = \frac{[\mjs_-]_{21}(t,x,\lz_0)-q[\mjs_-]_{11}(t,x,\lz_0)}{
  [\mjs_+]_{22}(t,x,\lz_0)-q[\mjs_+]_{12}(t,x,\lz_0)}e^{-2i\theta(t,x,\lz_0)}.
\end{equation}
Also, the one-fold Darboux transformation corresponding to \(\lz_0\) and \(\js_0\) takes the form
\begin{equation*}
  D[1] = \lz \I + \frac{1}{1+|q|^2}
  \begin{pmatrix}
    -\lz_0-\lz_0^\ast |q|^2 & (\lz_0^\ast-\lz_0) q^\ast \\
    (\lz_0^\ast-\lz_0) q & -\lz_0^\ast-\lz_0 |q|^2
  \end{pmatrix}.
\end{equation*}
The properties of the Jost functions imply
\begin{equation*}
  \lim_{x\to-\infty}q = \infty, \quad \lim_{x\to+\infty}q = 0.
\end{equation*}
Thereby, adding a pole to the scattering data under Darboux transformations can be explained by the following
\begin{theorem}\label{t:scadt}
Let the scattering data \(a_{11}(\lz)\), \(\lz\in \C_+\cup \R\), \(a_{12}(\lz)\), \(\lz\in \R\)
and \(b_j\) for \(j=1,\dots,N\) be given. Applying the Darboux transformation with \(\lz_0\in
\C_+\setminus\{\lz_1,\dots,\lz_N\}\) and \(\js_0 = u_0\js_-^{(1)}(t,x,\lz_0)+v_0\js_+^{(2)}(t,x,\lz_0)\),
where \(u_0\), \(v_0\in \C\setminus\{0\}\), we add an eigenvalue to the scattering
data leaving the original eigenvalues unchanged. Further,
\begin{center}
\begin{tabular}{ c c p{0.5cm} c c}
$\begin{aligned}[t]
   a'_{11}(\lz) &= \frac{\lz-\lz_0}{\lz-\lz_0^\ast} a_{11}(\lz),\\
   a'_{12}(\lz) &= a_{12}(\lz),\\
   b'_j &= \vphantom{\frac{\lz_j-\lz_0^\ast}{\lz_j-\lz_0}}b_j,\\
   b'_0 &= -\frac{v_0}{u_0}\vphantom{\frac{\lz_0-\lz_0^\ast}{a_{11}(\lz_0)}},
\end{aligned}$&
$\begin{aligned}[t]
   &\vphantom{\frac{\lz-\lz_0}{\lz-\lz_0^\ast}}\lz \in \C_+ \cup \R,\\
   &\vphantom{a'_{12}}\lz \in \R,\\
   &\vphantom{\frac{\lz_j-\lz_0^\ast}{\lz_j-\lz_0}}j=1,\dots,N,
\end{aligned}$&&
$\begin{aligned}[t]
   \rho'(\lz) &=  \frac{\lz-\lz_0^\ast}{\lz-\lz_0} \rho(\lz),\vphantom{\frac{\lz-\lz_0}{\lz-\lz_0^\ast}}\\
   \vphantom{a'_{12}}\\
   C'_j &= \frac{\lz_j-\lz_0^\ast}{\lz_j-\lz_0}C_j,\\
   C'_0 &= -\frac{v_0}{u_0} \frac{\lz_0-\lz_0^\ast}{a_{11}(\lz_0)}.
\end{aligned}$&
$\begin{aligned}[t]
   &\vphantom{\frac{\lz-\lz_0}{\lz-\lz_0^\ast}}\lz \in \R,\\
   \vphantom{a'_{12}}\\
   &\vphantom{\frac{\lz_j-\lz_0^\ast}{\lz_j-\lz_0}}j=1,\dots,N,
\end{aligned}$
\end{tabular}
\end{center}

\end{theorem}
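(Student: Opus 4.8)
The plan is to dress the Jost solutions by $D[1]$, renormalize them so that they regain the standard asymptotics $e^{-i\theta\st}$, and then read off each scattering datum from its Wronskian or residue description. First I would fix the renormalization. Since $q\to\infty$ as $x\to-\infty$ and $q\to0$ as $x\to+\infty$, the dressing matrix has the diagonal limits $D[1]\to\diag(\lz-\lz_0,\lz-\lz_0^\ast)$ as $x\to+\infty$ and $D[1]\to\diag(\lz-\lz_0^\ast,\lz-\lz_0)$ as $x\to-\infty$. As $D[1]$ is a degree-one, hence entire, polynomial in $\lz$, multiplying a Jost column by it preserves analyticity, and matching the limits against $e^{-i\theta\st}$ identifies the properly normalized dressed columns
\begin{align*}
  (\js'_+)^{(1)}&=\frac{D[1]\js_+^{(1)}}{\lz-\lz_0}, & (\js'_+)^{(2)}&=\frac{D[1]\js_+^{(2)}}{\lz-\lz_0^\ast},\\
  (\js'_-)^{(1)}&=\frac{D[1]\js_-^{(1)}}{\lz-\lz_0^\ast}, & (\js'_-)^{(2)}&=\frac{D[1]\js_-^{(2)}}{\lz-\lz_0}.
\end{align*}

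Next I would use the Wronskian identities $a_{11}=\det[\js_-^{(1)}|\js_+^{(2)}]$ and $a_{12}=\det[\js_-^{(2)}|\js_+^{(2)}]$ together with $\det D[1]=(\lz-\lz_0)(\lz-\lz_0^\ast)$ (a direct $2\times2$ computation) and the multiplicativity $\det[D[1]v|D[1]w]=\det D[1]\,\det[v|w]$. The normalization factors then cancel to give $a'_{11}=\frac{\lz-\lz_0}{\lz-\lz_0^\ast}a_{11}$ and $a'_{12}=a_{12}$, whence $\rho'=a'_{12}/a'_{11}=\frac{\lz-\lz_0^\ast}{\lz-\lz_0}\rho$. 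For a retained eigenvalue $\lz_j$, applying $D[1]$ to $\js_-^{(1)}(\lz_j)=b_j\js_+^{(2)}(\lz_j)$ and dividing both dressed columns by the common factor $\lz_j-\lz_0^\ast$ yields $b'_j=b_j$; differentiating $a'_{11}$ at $\lz_j$, where $a_{11}(\lz_j)=0$ removes the product-rule term, leaves $\frac{\id a'_{11}}{\id\lz}(\lz_j)=\frac{\lz_j-\lz_0}{\lz_j-\lz_0^\ast}\frac{\id a_{11}}{\id\lz}(\lz_j)$, so $C'_j=b'_j\bigl(\frac{\id a'_{11}}{\id\lz}(\lz_j)\bigr)^{-1}=\frac{\lz_j-\lz_0^\ast}{\lz_j-\lz_0}C_j$.

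The genuinely new content is the added eigenvalue at $\lz_0$. The vector $\js_0$ spans the kernel of $D[1]$ at $\lz=\lz_0$, so $D[1]\js_0\big\rvert_{\lz_0}=0$; substituting $\js_0=u_0\js_-^{(1)}(\lz_0)+v_0\js_+^{(2)}(\lz_0)$ gives $u_0 D[1]\js_-^{(1)}(\lz_0)=-v_0 D[1]\js_+^{(2)}(\lz_0)$, and dividing by $\lz_0-\lz_0^\ast$ turns this into $(\js'_-)^{(1)}(\lz_0)=-\frac{v_0}{u_0}(\js'_+)^{(2)}(\lz_0)$, i.e. $b'_0=-\frac{v_0}{u_0}$. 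The prefactor $\frac{\lz-\lz_0}{\lz-\lz_0^\ast}$ produces a simple zero of $a'_{11}$ exactly at $\lz_0$ — here one uses $a_{11}(\lz_0)\neq0$, which holds since $\lz_0\notin\{\lz_1,\dots,\lz_N\}$ — and a short computation gives $\frac{\id a'_{11}}{\id\lz}(\lz_0)=\frac{a_{11}(\lz_0)}{\lz_0-\lz_0^\ast}$, so $C'_0=b'_0\bigl(\frac{\id a'_{11}}{\id\lz}(\lz_0)\bigr)^{-1}=-\frac{v_0}{u_0}\frac{\lz_0-\lz_0^\ast}{a_{11}(\lz_0)}$.

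I expect the main obstacle to be the renormalization step, since the asymptotics of $D[1]$ differ between $x\to+\infty$ and $x\to-\infty$: the scalar factor attached to a column labelled $(1)$ is $\lz-\lz_0$ for $\js_+^{(1)}$ but $\lz-\lz_0^\ast$ for $\js_-^{(1)}$, and one must confirm that each chosen factor is compatible with the half-plane in which that column is analytic. The poles at $\lz_0$ and $\lz_0^\ast$ always fall in the half-plane opposite to the one of analyticity of the column they multiply, so no spurious singularities arise. Once the four factors are assigned correctly, the remaining identities reduce to determinant multiplicativity and elementary residue bookkeeping.
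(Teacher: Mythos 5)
Your proposal is correct and follows essentially the same route as the paper: identify the diagonal limits of $D[1]$ as $x\to\pm\infty$, renormalize the dressed Jost columns by the appropriate scalar factors (your assignment of $\lz-\lz_0$ versus $\lz-\lz_0^\ast$ to each column matches the paper's, and your analyticity check is the right justification), and then read off $a'_{11}$, $a'_{12}$, $b'_j$, $C'_j$ and the new pair $b'_0$, $C'_0$. The only differences are cosmetic: you compute $a'_{11}$ and $a'_{12}$ via $\det D[1]=(\lz-\lz_0)(\lz-\lz_0^\ast)$ and Wronskian multiplicativity where the paper takes $x\to\infty$ limits of the modified Jost entries, and you obtain $b'_0=-v_0/u_0$ from the kernel relation $D[1](\lz_0)\js_0=0$ where the paper evaluates the explicit matrix $D[1](\lz_0)/(\lz_0-\lz_0^\ast)$ in terms of the ratio $q$ — both being equivalent computations of the same quantities.
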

\begin{proof}
The scattering data rely heavily on the Jost functions. That is why, the first step is to find the behavior
of the Jost functions in the transformed system. Therefore, we need to see what the limit values of the Darboux
transformation are
\begin{align*}
  \lim_{x\to-\infty}D[1] = \diag(\lz-\lz_0^\ast,\lz-\lz_0),\quad
  \lim_{x\to+\infty}D[1] = \diag(\lz-\lz_0,\lz-\lz_0^\ast).
\end{align*}
Then, we can deduce that the transformed Jost functions can be expressed through
\begin{align*}
  (\js_-^{(1)})'(t,x,\lz) = \frac{D[1]}{\lz-\lz_0^\ast} \js_-^{(1)}(t,x,\lz),\quad
  (\js_+^{(2)})'(t,x,\lz) = \frac{D[1]}{\lz-\lz_0^\ast} \js_+^{(2)}(t,x,\lz),
\end{align*}
which also is passed onto \((\mjs_-^{(1)})'\) and \((\mjs_+^{(2)})'\). As already mentioned in
Section \ref{sc:NLS}, \(a_{11}(\lz) =  \det[\js_-^{(1)}|\js_+^{(2)}]\). It follows that for \(\lz \in \C_+\cup\R\), the
limit values of \([\mjs_-]_{11}\) and \([\mjs_+]_{22}\) are \(a_{11}(\lz)\) as \(x\) goes to
\(+\infty\) and \(-\infty\), respectively. So that we have
\begin{equation*}
a'_{11}(\lz) = \lim_{x\to \infty} ([\mjs_-]_{11})' = \frac{\lz-\lz_0}{\lz-\lz_0^\ast} a_{11}(\lz).
\end{equation*}
Analogously, we find for \(a_{12}(\lz)\) that
\begin{equation*}
  a_{12}(\lz) = [\js_+]_{22}[\js_-]_{12}-[\js_+]_{12}[\js_-]_{22}
  =([\mjs_+]_{22}[\mjs_-]_{12}-[\mjs_+]_{12}[\mjs_-]_{22})e^{2i\theta(t,x,\lz)},
\end{equation*}
and therefore the limit values of \([\mjs_-]_{12}\) and \(-[\mjs_+]_{12}\) behave as
\(a_{12}(\lz)e^{-2i\theta(t,x,\lz)}\) as \(x\) goes to \(+\infty\) and \(-\infty\), respectively.
Consequently,
\begin{equation*}
a'_{12}(\lz) = \lim_{x\to \infty} ([\mjs_-]_{12})' =  a_{12}(\lz).
\end{equation*}
Also resulting in \(\rho'(\lz) =  \frac{\lz-\lz_0^\ast}{\lz-\lz_0} \rho(\lz)\). Since the Jost functions we relate
in order to obtain \(b_j\) are changed identically with \(D[1]/(\lz-\lz_0^\ast)\), \(b_j\) remain unchanged, i.e.\@
\(b'_j = b_j\) for \(j=1,\dots,N\). Then, by the definition of \(C_j\) we can calculate
\begin{equation*}
  C'_j = b'_j\Bigl(\frac{\id a'_{11}(\lz_j)}{\id \lz}\Bigr)^{-1} = \frac{\lz_j-\lz_0^\ast}{\lz_j-\lz_0} C_j,
  \quad j=1,\dots,N.
\end{equation*}
At the new eigenvalue \(\lz= \lz_0\), we have that the transformed Jost function are also
identically changed by
\begin{equation*}
\frac{D[1](t,x,\lz_0)}{(\lz_0-\lz_0^\ast)}= \frac{1}{1+|q|^2}
\begin{pmatrix}
  |q|^2 & -q^\ast \\
  -q & 1
\end{pmatrix}.
\end{equation*}
Hence, as we calculated already in \eqref{eq:quo}, we obtain
\begin{equation*}
  b'_0 = \frac{([\js_-]_{21})'(t,x,\lz_0)}{([\js_+]_{22})'(t,x,\lz_0)}
  =\frac{[\js_-]_{21}(t,x,\lz_0)-q[\js_-]_{11}(t,x,\lz_0)}{[\js_+]_{22}(t,x,\lz_0)-q[\js_+]_{12}(t,x,\lz_0)}
  =-\frac{v_0}{u_0}.
\end{equation*}
Subsequently, the weight for the added eigenvalue is readily obtained by
\begin{equation*}
  C'_0 = b'_0\Bigl(\frac{\id a'_{11}(\lz_0)}{\id \lz}\Bigr)^{-1} = -\frac{v_0}{u_0} \frac{\lz_0-\lz_0^\ast}{a_{11}(\lz_0)}.
\end{equation*}
\end{proof}
\begin{remark}\label{r:scadt}
A particular example is dressing a pure soliton solution from the zero seed solution
for which \(a_{11}(\lz) = 1\), \(a_{12}(\lz) = 0\), whereby \(\rho(\lz) = 0\). Then,
inserting poles \(\lz_1,\dots,\lz_N \in \C_+\) with corresponding \(u_j,v_j\in
\C\setminus\{0\}\), \(j=1,\dots,N\), results in the (relevant) scattering data
\begin{align*}
  a_{11}^{(N)}(\lz) & = \prod_{j=1}^{N} \frac{\lz-\lz_j}{\lz-\lz_j^\ast},\quad
  a_{12}^{(N)}(\lz) = 0,\quad
  C_j^{(N)} &= -\frac{v_j}{u_j} \prod_{k=1}^{N}(\lz_j-\lz_k^\ast)\Bigl(\sideset{}{'}\prod_{k=1}^{N}(\lz_j-\lz_k)\Bigr)^{-1},
\end{align*}
where the prime indicates that the term with \(k = j\) is omitted from the product.
\end{remark}
\section{Robin boundary condition}\label{app:rob}
As mentioned in Section \ref{sc:NLS}, the proof, which we tailored to fit the
new boundary condition, is also applicable to the Robin boundary condition.
In fact, we will proceed and write it down explicitly not only for the
convenience of the interested reader but also since it was a guiding step
between the proof for the defect conditions connecting two half-lines, see \cite{G},
and Proposition \ref{p:Nmirsol}.
\subsection{Dressing the Robin boundary condition}
As before, we look at the NLS equation \eqref{eq:NLS} for \((t,x)\in
\R_+\times\R_+\) and complement it with Robin boundary condition at \(x=0\),
which, in our notation, is
\begin{equation}\label{eq:rbc}
  u_x = \alpha u, \quad \alpha\in\R.
\end{equation}
Then, the NLS equation has again a corresponding Lax system and the boundary condition can be
written in the form of a boundary constraint
\begin{align}\label{eq:rbcv}
  0&=(V(t,x,-\lz) G_0(\lz) - G_0(\lz) V(t,x,\lz))\big\rvert_{x=0},
\end{align}
where the boundary matrix \(G_0(\lz)\) is given by
\begin{equation}\label{eq:G0}
  G_0(\lz)= \frac{1}{i\alpha+2\lz}
  \begin{pmatrix}
    i\alpha-2\lz & 0\\
    0 &i\alpha+2\lz
  \end{pmatrix}
\end{equation}
and is, in particular, independent of \(t\) and \(x\). Similarly to the boundary matrix
\eqref{eq:K0}, \(G_0(\lz)\) is scaled by \((i\alpha+2\lambda)^{-1}\), so that \((G_0(\lz))^{-1}
= G_0(-\lz)\), which is crucial for the proof of Proposition \ref{p:Nmirsolrob}.
\begin{remark}\label{rem:zero}
The kernel vectors for \(G_0(\lz)\) can be easily obtained at \(\lz=\frac{i\alpha}{2}\) and
\(\lz=-\frac{i\alpha}{2}\), we have respectively \(e_1\) and \(e_2\).
\end{remark}
\subsection{Dressing the boundary}
In this approach we will also leave out boundary-bound soliton solutions.
Due to the fact that for boundary-bound soliton solutions the proof needs to be
slightly changed. The proof will be not as detailed as for Proposition \ref{p:Nmirsol}.
Nonetheless, we will point out the differences rather than the similarities.
\begin{prop}\label{p:Nmirsolrob}
Consider a solution \(u[0](t,x)\) to the NLS equation on the half-line subject
to the Robin boundary conditions \eqref{eq:rbcv} with parameter \(\alpha\in \R\).
Take one solution \(\js_0\) of the undressed Lax system corresponding to \(u[0]\)
for \(\lz=\lz_0=-\frac{i\alpha}{2}\). Further, take \(N\) solutions \(\js_j\) of
the undressed Lax system corresponding to \(u[0]\) for distinct \(\lz= \lz_j\in
\C \setminus \bigl(\R\cup i\R\bigr)\), \(j=1,\dots,N\). Constructing \(G_0(\lz)\) as
in \eqref{eq:G0} with \(\alpha\), we assume that there exist paired solutions
\(\widehat{\js}_j\) of the undressed Lax system corresponding to \(u[0]\) for
\(\lz=\widehat{\lz}_j= -\lz_j\), \(j=1,\dots,N\), satisfying
\begin{equation}\label{eq:ssG0}
\widehat{\js}_j\big\rvert_{x=0} = (G_0(\lz_j) \js_j)\big\rvert_{x=0},\quad
\widehat{\lz}_k\neq\lz_j.
\end{equation}
Then, a \(2N\)-fold Darboux transformation \(D[2N]\) using
\(\{\js_1,\widehat{\js}_1,\dots,\js_N,\widehat{\js}_N\}\)
and their respective spectral parameter lead to the solution \(u[2N]\)
to the NLS equation on the half-line. In particular, the boundary condition
is preserved and we denote such a solution \(u[2N]\) by \(\widehat{u}[N]\).
\end{prop}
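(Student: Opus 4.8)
The plan is to run the argument of Proposition \ref{p:Nmirsol} in reduced form: the Robin boundary matrix \(G_0(\lz)\) is diagonal, independent of \(t\) and \(x\), only linear in \(\lz\), and---decisively---it is prescribed in advance with \(G_N(\lz)=G_0(\lz)\). Thus nothing has to be \emph{constructed}; I only have to \emph{verify} that the prescribed \(G_0\) still intertwines after dressing. First I would note that the \(2N\) vectors \(\{\js_1,\widehat{\js}_1,\dots,\js_N,\widehat{\js}_N\}\) belong to the \(2N\) pairwise distinct spectral parameters \(\lz_1,\widehat{\lz}_1,\dots,\lz_N,\widehat{\lz}_N\) (by \(\widehat{\lz}_k\neq\lz_j\) and \(\widehat{\lz}_j=-\lz_j\)), so \(D[2N]\) is uniquely determined and \(\widehat{u}[N]=u[2N]\) solves the NLS equation by Appendix \ref{app:dar}. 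The extra solution \(\js_0\) at \(\lz_0=-\tfrac{i\alpha}{2}\), together with Remark \ref{rem:zero}, only records the kernel of \(G_0\) and mirrors the case distinction of Proposition \ref{p:Nmirsol}; it will turn out to be dispensable for closing the argument.

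The core is to establish the boundary intertwining
\begin{equation*}
  \bigl(D[2N](t,x,-\lz)\,G_0(\lz)\bigr)\big\rvert_{x=0}=\bigl(G_0(\lz)\,D[2N](t,x,\lz)\bigr)\big\rvert_{x=0},
\end{equation*}
which is equivalent to \(V[2N]\) satisfying the boundary constraint \eqref{eq:rbcv}. After clearing the prefactor \((i\alpha+2\lz)^{-1}\), \(G_0\) becomes \(i\alpha\I-2\lz\st\), and I would set \(L(\lz)=D[2N](t,0,-\lz)G_0(\lz)\) and \(R(\lz)=G_0(\lz)D[2N](t,0,\lz)\), two matrix polynomials of degree \(2N+1\). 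The dressing matrix supplies \(4N\) zeros: at each \(\lz_j\) one has \(R(\lz_j)\js_j=G_0(\lz_j)D[2N](\lz_j)\js_j=0\), while \eqref{eq:ssG0} in the form \(G_0(\lz_j)\js_j\big\rvert_{x=0}=\widehat{\js}_j\big\rvert_{x=0}\) gives \(L(\lz_j)\js_j\big\rvert_{x=0}=D[2N](\widehat{\lz}_j)\widehat{\js}_j\big\rvert_{x=0}=0\); the orthogonal vectors \(\jss_j=\sigma_2\js_j^\ast\) and \(\widehat{\jss}_j=\sigma_2\widehat{\js}_j^\ast\) yield the conjugate conditions at \(\lz_j^\ast\) and \(\widehat{\lz}_j^\ast\). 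Altogether this produces \(4N\) vector conditions on the difference \(C(\lz)=L(\lz)-R(\lz)\).

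The step I expect to be decisive---and where the new boundary condition required the auxiliary parameter \(\lz_0\)---is the bookkeeping of the two extreme coefficients of \(C(\lz)\). Since \(D[2N]\) is monic of even degree \(2N\), the leading coefficients of \(L\) and \(R\) both equal \(-2\st\), so the \(\lz^{2N+1}\)-coefficient of \(C\) vanishes; and since \(G_0(0)=i\alpha\I\) is a scalar matrix it commutes with \(D[2N](t,0,0)\), so the constant term of \(C\) vanishes as well. Hence \(C(\lz)=\lz^{2N}C_{2N}+\dots+\lz C_1\) carries only \(2N\) unknown matrix coefficients, i.e.\@ \(8N\) scalars, matched exactly by the \(8N\) scalar equations coming from the \(4N\) conditions above. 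Collecting them into the generalized Vandermonde system
\begin{equation*}
  (C_{2N},\dots,C_1)
  \begin{pmatrix}
    \lz_1^{2N}\js_1 & \cdots & (\widehat{\lz}_N^\ast)^{2N}\widehat{\jss}_N \\
    \vdots & \vdots & \vdots \\
    \lz_1\js_1 & \cdots & \widehat{\lz}_N^\ast\widehat{\jss}_N
  \end{pmatrix}=0,
\end{equation*}
with distinct nodes \(\{\lz_j,\widehat{\lz}_j,\lz_j^\ast,\widehat{\lz}_j^\ast\}\) and linearly independent columns, forces each \(C_k=0\), hence \(L(\lz)=R(\lz)\). This is precisely the reduction advertised after Proposition \ref{p:Nmirsol}: the \(4N\) zeros of \(D[2N]\) now suffice on their own.

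Finally I would read the equality \(L=R\) back as the statement that \(G_0(\lz)\) satisfies \eqref{eq:rbcv} with \(V\) replaced by \(V[2N]\) at \(x=0\); equivalently \(G_N=G_0\), so the same solution-independent boundary matrix works for \(\widehat{u}[N]\) and the Robin condition \eqref{eq:rbc} is preserved. In contrast to Proposition \ref{p:Nmirsol}, no long-time analysis of the kernel vectors and no recourse to the function space \(X\) is needed, because the diagonal, constant \(G_0\) carries no ambiguous sign to fix. The only genuinely non-routine points are the two coefficient cancellations---above all the commutation \(G_0(0)=i\alpha\I\), which is what shrinks the required number of collocation conditions from \(4N+4\) down to \(4N\)---and the non-degeneracy of the Vandermonde system; these are where I would focus the write-up.
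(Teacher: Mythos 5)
Your argument is correct, but it is not the route the paper takes in Appendix~\ref{app:rob}. You prescribe \(G_N=G_0\) from the outset and verify the intertwining relation \eqref{eq:key} directly, observing that the \(\lz^{2N+1}\)- and \(\lz^{0}\)-coefficients of \(L(\lz)-R(\lz)\) cancel automatically (the monic degree-\(2N\) matrix \(D[2N]\) against the leading term \(-2\lz\st\) of the rescaled \(G_0\), and \(G_0(0)=i\alpha\I\) scalar), so that the \(4N\) zeros supplied by the dressing data together with \((G_0(\lz))^{-1}=G_0(-\lz)\) suffice to annihilate the remaining \(2N\) matrix coefficients. This is precisely the Zhang-style verification that the paper itself flags, in the discussion following Proposition~\ref{p:Nmirsol}, as available for the Robin case; the appendix nevertheless proves Proposition~\ref{p:Nmirsolrob} the other way around: it \emph{constructs} \(G_N(t,\lz)=2\lz\st+G^{(0)}(t)\) as a dressing matrix from the transformed kernel vectors \(D[2N](t,x,\lz_0)e_1\) and \(D[2N](t,x,\lz_0^\ast)e_2\) --- so the auxiliary solution at \(\lz_0=-\tfrac{i\alpha}{2}\), which you correctly identify as dispensable in your route, is actually used there --- then derives \eqref{eq:key} and only afterwards identifies \(G^{(0)}(t)=-i\alpha\I\) by matching the order-\(2N\) coefficient, computing \(\det G_N\) in two ways, and fixing the sign through the zero-order identity \eqref{eq:S2N}. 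Your route is shorter, needs no sign determination and no recourse to kernel vectors of \(G_0\); the paper's route is deliberately chosen because it is the degenerate case of the construction required for the new boundary condition, where \(K_0\) at \(\lz=0\) is not scalar and both of your automatic cancellations fail. The one point you should not wave at is the invertibility of the \(4N\times4N\) collocation matrix (it is not a plain Vandermonde statement, since the nodes come in conjugate pairs carrying orthogonal vectors), but the paper relies on exactly the same non-degeneracy in the proof of Proposition~\ref{p:Nmirsol}, so your write-up is on equal footing there.
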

\begin{proof}
Defining the matrix \(G_N(t,\lz)=2\lambda \st +G^{(0)}(t)\) similar to \(K_N(t,x,\lz)\)
through the transformed kernel vectors \(D[2N](t,x,\lz_0)e_1\) and \(D[2N](t,x,\lz_0^\ast)e_2\)
at \(x=0\) and respectively \(\lz= \lz_0\) and \(\lz= \lz_0^\ast\), we can derive that
the equality
\begin{equation}\label{eq:key}
  (D[2N](t,x,-\lz) G_0(\lz))\big\rvert_{x=0} = (G_N(t,\lz) D[2N](t,x,\lz))\big\rvert_{x=0}
\end{equation}
holds, whereas \(G_0(\lz)\) is multiplied by \(2\lambda+i\alpha\).

To reconstruct the expression of \(G_N(t,\lz)\), we analyze the equality \eqref{eq:key}.
In particular for the equality of the matrix coefficients regarding \(\lz\) of order \(2N\),
we obtain for the off-diagonal entries of \(G^{(0)}(t)\) that \(G^{(0)}_{12}(t)=0\) and \(G^{(0)}_{21}(t)=0\).
Then, for the diagonal entries, we need to evaluate the determinant of \(G_N(t,\lz)\) in two ways.
Firstly, as a product of matrices
\begin{align*}
  \det(G_N(t,\lz)) &=\det(D[2N](t,0,-\lz))\det(G_0(\lz))\det((D[2N](t,0,\lambda))^{-1})
  = -4\lambda^2 -\alpha^2.
\end{align*}
Secondly, through the definition \(G_N(t,\lz)=2\lambda \st +G^{(0)}(t)\) and the partial result for
the off-diagonal entries, we obtain consequently
\begin{equation}\label{eq:Gdet}
\begin{aligned}
  G^{(0)}_{11}(t)-G^{(0)}_{22}(t) &= 0, \qquad \det(G^{(0)}(t))&=-\alpha^2.
\end{aligned}
\end{equation}
Hence, we need to have \(G^{(0)}(t) = \pm i\alpha \I\). However, by the equality \eqref{eq:key} of the
matrix coefficients regarding \(\lz\) of zero-th order, we can verify, that the structure of \(G_0(\lz)\)
is preserved, since we need to have \(G^{(0)}(t) = - i\alpha \I\) in order for
\begin{equation}\label{eq:S2N}
  -i\alpha\Sigma_{2N}=G^{(0)}(t)\Sigma_{2N}
\end{equation}
to hold.
\end{proof}
This proposition serves as a means to insert solitons in the case of the zero seed solution and also
the non-zero seed solution with constant background as presented in \cite{Zh}. Similarly, to the
argumentation therein, we would also need to proof distinctly the dressing of boundary-bound
solitons with pure imaginary spectral parameters \(\lz_j\in i\R\), \(j=1,\dots,N\). However,
since the reasoning is similar and it is presumably not applicable in the case of the new
boundary condition, we omit that consideration here.

Nonetheless, we hereby gave the proof to dress solitons corresponding to \(\lz_j\in \C\setminus
\bigl(\R\cup i\R\bigr)\), \(j=1,\dots,N\), in the dressing the boundary method we adapted
to the new boundary conditions. It turns out that the equality \eqref{eq:S2N}, from which we then
follow the definite form of \(G_N(\lz)\), functions as an intermediate step between the ideas in
the proof for the defect conditions connecting two half-lines \cite{G} and for the new boundary
conditions, see Proposition \ref{p:Nmirsol}.
\subsection{Relations between scattering data}
Consider the zero seed solution \(u[0]=0\) and \(\C\setminus\R\ni \lz_0 = -\frac{i\alpha}{2}\).
Following the steps in Section \ref{sc:rel}, we take \(\js_j = e^{-i(\lz_j x+ 2 \lz_j^2 t)\st}(u_j,v_j)^\intercal\)
for \(\lz =\lz_j\) and with respect to the relation \(\widehat{\js}_j \big\rvert_{x=0} = (G_0(\lz_j) \js_j)
\big\rvert_{x=0}\), \(-\lz_k\neq\lz_j\) for all \(j,k\in\{1,\dots,N\}\), we also take \(\widehat{\js}_j =
e^{-i(\widehat{\lz}_j x + 2 \widehat{\lz}_j^2 t)\st} (\hat{u}_j,\hat{v}_j)^\intercal\) for \(\lz =
\widehat{\lz}_j = -\lz_j\), whereas
\begin{equation*}
  \frac{\hat{u}_j}{\hat{v}_j} = \frac{i\alpha-2\lz_j}{i\alpha+2\lz_j} \frac{u_j}{v_j}, \quad j=1,\dots,N.
\end{equation*}
Analogously, if we apply a two-fold Darboux transformation consisting of \(\{\lz_1, \js_1, \widehat{\lz}_1^\ast,
\widehat{\jss}_1\}\), we obtain the scattering data, which particularly results in the relation
\begin{equation}\label{eq:c1c2}
  C_1^{(2)}(C_2^{(2)})^\ast = -4 \lambda_1^2\cdot \frac{i\alpha-2\lz_1}{i\alpha+2\lz_1} \cdot
  \frac{\Im(\lz_1)^2}{\Re(\lz_1)^2},
\end{equation}
which is up to notation the same as in \cite{BiHw}. To align the notation, one would need to
complex conjugate \eqref{eq:c1c2} and then it would be equal to the equation (2.36) in their paper
with \(k_1= -\lambda_1^\ast\). This is due to the differently defined potential \(\widetilde{Q}\)
of the matrix \(V\), which as a consequence gives the existence of Jost solutions with different
asymptotic behavior and continuations into different parts of the complex plane. Analogously to
Section \ref{sc:rel}, we obtain the following relations between the initial positions and phases
of \(2N\) inserted solitons.
\begin{remark}
In general, we can construct a \(2N\)-Darboux transformation using the information given by
\(\{\lz_1, \js_1, \dots, \lz_N, \js_N, \widehat{\lz}_1^\ast, \widehat{\jss}_1,\dots,
\widehat{\lz}_N^\ast, \widehat{\jss}_N\}\), where \(\lz_j = \xi_j + i \eta_j\) and consequently
\(\widehat{\lz}_1^\ast = -\xi_j + i \eta_j\) for \(j=1,\dots,N\) with corresponding solutions to
the undressed Lax system as above. Then for \(j=1,\dots,N\), the relation for a pair of initial
positions \(x_j\) and \(\widehat{x}_j = x_{N+j}\) as well as phases \(\phi_j\) and
\(\widehat{\phi}_j = \phi_{N+j}\) amounts to
\begin{align*}
  x_j + \hat{x}_j & = \frac{1}{2\eta_j}\log\Bigl(1+\frac{\eta_j^2}{\xi_j^2}\Bigr)+
  \frac{1}{4\eta_j}\log\Bigl(\frac{(2\xi_j)^2 +(\alpha- 2\eta_j)^2}{(2\xi_j)^2
  +(\alpha+ 2\eta_j)^2}\Bigr)\\
  &\quad -\frac{1}{2\eta_j} \sideset{}{'}\sum_{k=1}^{N} \log
  \frac{[(\xi_j-\xi_k)^2+(\eta_j-\eta_k)^2][(\xi_j+\xi_k)^2+(\eta_j-\eta_k)^2]}{
  [(\xi_j+\xi_k)^2+(\eta_j+\eta_k)^2][(\xi_j-\xi_k)^2+(\eta_j+\eta_k)^2]},\\
  \varphi_j - \hat{\varphi}_j & = 2\arg(\lz_j) + \arg\Bigl(\frac{2\xi_j+i(2\eta_j-\alpha)}{
  2\xi_j+i(2\eta_j+\alpha)}\Bigr) \\
  &\quad- \sideset{}{'}\sum_{k=1}^{N}\arg\Bigl(
  \frac{[(\xi_j-\xi_k)+i(\eta_j-\eta_k)][(\xi_j+\xi_k)+i(\eta_j-\eta_k)]}{
  [(\xi_j+\xi_k)+i(\eta_j+\eta_k)][(\xi_j-\xi_k)+i(\eta_j+\eta_k)]}\Bigr),
\intertext{whereas the product of a pair of weights \(C_j\), \(\widehat{C}_j = C_{N+j}\) is}
  C_j \widehat{C}_j^\ast &=-4\lz_j^2 \frac{i\alpha-2\lz_j}{i\alpha+2\lz_j}\frac{(2\eta_j)^2}{(2\xi_j)^2}
  \Bigl[\sideset{}{'}\prod_{k=1}^{N}  \frac{(\lz_j-\lz_k^\ast)(\lz_j+\lz_k)}{
  (\lz_j-\lz_k)(\lz_j+\lz_k^\ast)}\Bigr]^2.
\end{align*}
\end{remark}
\end{appendices}


\end{document}